\def\H{\mathcal{H}}
\def\S{\mathfrak{S}}
\def\B{\mathfrak{B}}
\newcounter{defin}  \newcounter{lemma}  \newcounter{theorem}
\newcounter{property} \newcounter{corol}  \newcounter{remark} \newcounter{example}
\newenvironment{lemma}{\par\refstepcounter{lemma}
     \textbf{Lemma \thelemma.} }{\rm\par}
\newenvironment{theorem}{\par\refstepcounter{theorem}
     \textbf{Theorem \thetheorem.}\ }{\rm\par}
\newenvironment{corollary}{\par\refstepcounter{corol}
     \textbf{Corollary \thecorol.} }{\rm\par}
\newenvironment{remark}{\par\refstepcounter{remark}
     \textbf{Remark \theremark.}}{\rm\par}
\begin{document}

\title{On n-partite superactivation of quantum channel capacities}
\author{M.E. Shirokov\footnote{Steklov Mathematical Institute, RAS, Moscow, email:msh@mi.ras.ru}}
\date{}
\maketitle \vspace{-10pt}

\begin{abstract}
A generalization of the superactivation of  quantum channel
capacities to the case of $\,n>2\,$ channels is considered. An
explicit example of such superactivation for the 1-shot quantum
zero-error capacity is constructed for $n=3$.

Some implications of this example and its reformulation on terms of
quantum measurements are described.
\end{abstract}
\maketitle

\section{General observations}

The superactivation of quantum channel capacities is one of the most
impressive quantum effects having no classical counterpart. It means
that the particular capacity $C$ of the tensor product of two
quantum channels $\Phi_1$ and $\Phi_2$ may be positive despite the
same capacity of each of these channels is zero, i.e.
\begin{equation}\label{s-2}
C(\Phi_1\otimes\Phi_2)>0\quad \textrm{while} \quad
C(\Phi_{1})=C(\Phi_{2})=0.
\end{equation}

This effect was originally observed by G.Smith and J.Yard for the
case of quantum $\varepsilon$-error capacity \cite{S&Y}. Then the
possibility of superactivation of other capacities, in particular,
zero-error capacities was shown \cite{CCH,C&S,Duan,SSY}.

It seems reasonable to consider the generalization of the above
effect to the case of $\,n>2\,$ channels $\Phi_1,...,\Phi_n$
consisting in the following property
\begin{equation}\label{s-n}
C(\Phi_1\otimes\ldots\otimes\Phi_n)>0\quad \textrm{while} \quad
C(\Phi_{i_1}\otimes\ldots\otimes\Phi_{i_k})=0
\end{equation}
for any proper subset $\Phi_{i_1},...,\Phi_{i_k}$ ($k<n$) of the set
$\Phi_1,...,\Phi_n$. This property can be called \emph{$n$-partite
superactivation} of the capacity $C$.

Property (\ref{s-n}) means that all the channels $\Phi_1,...,\Phi_n$
are required to transmit (classical or quantum) information by using
the protocol corresponding to the capacity $C$, i.e. excluding any
channel from the set $\Phi_1,...,\Phi_n$ makes other channels
useless for information transmission.\smallskip

The obvious difficulty in finding channels $\Phi_1,...,\Phi_n$
demonstrating property (\ref{s-n}) for given capacity $C$ consists
in necessity to prove the vanishing of
$\,C(\Phi_{i_1}\otimes\ldots\otimes\Phi_{i_k})\,$ for any subset
$\Phi_{i_1},...,\Phi_{i_k}$.\smallskip

If $C$ is the 1-shot capacity of some protocol of information
transmission and $\Phi_i=\Phi$ for all $i=\overline{1,n}$ then
(\ref{s-n}) means that the $n$-shot capacity of this protocol is
positive while the corresponding  $(n-1)$-shot capacity is
zero.\smallskip

In \cite{NBC} it is shown how to construct for any $n$ a channel
$\Psi_n$ such that
\begin{equation}\label{nbc}
\bar{Q}_0(\Psi^{\otimes n}_n)=0\quad \textrm{and} \quad
\bar{Q}_0(\Psi^{\otimes m}_n)>0,
\end{equation}
where $\bar{Q}_0$ is the 1-shot quantum zero-error capacity and
$\,m\,$ is a natural number satisfying the inequality
$n/m\leq2\ln(3/2)/\pi$ (implying $m>n$). It follows that there is
$\,\tilde{n}>n\,$ not exceeding $m$ such that (\ref{s-n}) holds for
$n=\tilde{n}$, $C=\bar{Q}_0$ and
$\Phi_1=...=\Phi_{\tilde{n}}=\Psi_n$. Unfortunately, we can not
specify the number $\,\tilde{n}\,$ in that construction.

In this paper we modify the example in \cite{NBC} (by extending its
noncommutative graph) to construct a family of channels
$\{\Phi_{\theta}\}$ with $d_A=4$  and $d_E=3$ having the following
property
\begin{equation}\label{3-p-s}
\bar{Q}_0(\Phi_{\theta_1}\otimes
\Phi_{\theta_2}\otimes\Phi_{\theta_3})>0\quad \textrm{while}
\quad\bar{Q}_0(\Phi_{\theta_i}\otimes \Phi_{\theta_j})=0\quad
\forall\, i\neq j,
\end{equation}
where $\theta_1,\theta_2,\theta_3$ are positive numbers such that
$\,\theta_1+\theta_2+\theta_3=\pi\,$. Thus, the channels
$\Phi_{\theta_1},\Phi_{\theta_2},\Phi_{\theta_3}$ demonstrate the
3-partite superactivation of the 1-shot quantum zero-error capacity.

Property (\ref{3-p-s}) means that all the  channels
$\Phi_{\theta_i}$ and all the bipartite channels
$\Phi_{\theta_i}\otimes \Phi_{\theta_j}$ have no ideal (noiseless or
reversible) subchannels, but the tripartite channel
$\Phi_{\theta_1}\otimes \Phi_{\theta_2}\otimes\Phi_{\theta_3}$ has.

By using the observation in \cite[Section 4]{Sh&Sh} superactivation
property (\ref{3-p-s}) can be reformulated in terms of quantum
measurements theory as the existence of quantum observables
$\mathcal{M}_{\theta_1},\mathcal{M}_{\theta_2},\mathcal{M}_{\theta_3}$
such that all the observables $\mathcal{M}_{\theta_i}$ and all the
bipartite observables $\mathcal{M}_{\theta_i}\otimes
\mathcal{M}_{\theta_j}$ have no indistinguishable subspaces but the
tripartite observable $\mathcal{M}_{\theta_1}\otimes
\mathcal{M}_{\theta_2}\otimes \mathcal{M}_{\theta_3}$ has.

\section{Preliminaries}

Let
$\Phi:\mathfrak{S}(\mathcal{H}_A)\rightarrow\mathfrak{S}(\mathcal{H}_B)$
be a quantum channel, i.e. a  completely positive trace-preserving
linear map \cite{H-SCI,N&Ch}. It has the Kraus representation
\begin{equation}\label{Kraus-rep}
\Phi(\rho)=\sum_{k}V_{k}\rho V^{*}_{k},
\end{equation}
where $V_{k}$ are linear operators from $\mathcal{H}_A$ into
$\mathcal{H}_B$ such that $\sum_{k}V^{*}_{k}V_{k}=I_{\H_A}$. The
minimal number of summands in such representation is called
\emph{Choi rank} of $\Phi$ and is denoted $d_E$ (while
$d_A\doteq\dim\H_A$ and $d_B\doteq\dim\H_B$).

The 1-shot quantum zero-error capacity $\bar{Q}_0(\Phi)$ of a
channel $\Phi$ is defined as $\;\sup_{\H\in
q_0(\Phi)}\log_2\dim\H\,$, where $q_0(\Phi)$ is the set of all
subspaces $\H_0$ of $\H_A$ on which the channel $\Phi$ is perfectly
reversible (in the sense that there is a channel $\Theta$ such that
$\Theta(\Phi(\rho))=\rho$ for all states $\rho$ supported by
$\H_0$). Any subspace $\H_0\in q_0(\Phi)$ is called \emph{error
correcting code} for the channel $\Phi$ \cite{W&Co, H-SCI}.

The (asymptotic) quantum zero-error  capacity is defined by
regularization: $Q_0(\Phi)=\sup_n n^{-1}\bar{Q}_0(\Phi^{\otimes n})$
\cite{ZEC, C&S, W&Co}.

The quantum zero-error capacity of a channel $\Phi$ is determined by
its \emph{noncommutative graph} $\mathcal{G}(\Phi)$, which can be
defined as the subspace of $\mathfrak{B}(\mathcal{H}_A)$ spanned by
the operators $V^*_kV_l$, where $V_k$ are operators from any Kraus
representation (\ref{Kraus-rep}) of $\Phi$ \cite{W&Co}. In
particular, the Knill-Laflamme error-correcting condition \cite{K-L}
implies the following lemma.\smallskip

\begin{lemma}\label{trans-l+}
\emph{A set $\{\varphi_k\}_{k=1}^d$ of unit orthogonal vectors in
$\H_A$ is a basis of error-correcting code for a channel
$\,\Phi:\S(\H_A)\rightarrow\S(\H_B)$ if and only if
\begin{equation}\label{operators+}
\langle \varphi_l|A|\varphi_k\rangle=0\quad\textit{and}\quad \langle
\varphi_l|A|\varphi_l\rangle=\langle
\varphi_k|A|\varphi_k\rangle\quad\forall
A\in\mathfrak{L},\;\,\forall k\neq l,
\end{equation}
where $\mathfrak{L}$ is any subset of $\,\B(\H_A)$ such that
$\,\mathrm{lin}\mathfrak{L}=\mathcal{G}(\Phi)$.}
\end{lemma}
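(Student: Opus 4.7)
The plan is to invoke the Knill--Laflamme error-correcting condition and then exploit the linearity of the inner-product evaluations. First, I would fix a Kraus representation $\Phi(\rho)=\sum_k V_k\rho V_k^*$ so that $\mathcal{G}(\Phi)=\mathrm{lin}\{V_i^*V_j\}_{i,j}$ by definition. The Knill--Laflamme theorem cited above then tells us that the subspace with orthonormal basis $\{\varphi_k\}_{k=1}^{d}$ is an error-correcting code for $\Phi$ if and only if there exists a Hermitian matrix $(c_{ij})$ such that
\[
\langle\varphi_l|V_i^*V_j|\varphi_k\rangle=c_{ij}\,\delta_{lk}\qquad\forall\,i,j,k,l.
\]

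Second, I would split this single equation into two conditions according to whether $k=l$ or $k\neq l$. The off-diagonal part ($k\neq l$) gives $\langle\varphi_l|V_i^*V_j|\varphi_k\rangle=0$; the diagonal part says that $\langle\varphi_l|V_i^*V_j|\varphi_l\rangle=c_{ij}$ is independent of $l$, i.e.\ $\langle\varphi_l|V_i^*V_j|\varphi_l\rangle=\langle\varphi_k|V_i^*V_j|\varphi_k\rangle$ for every $k,l$. Conversely, given both of these conditions, setting $c_{ij}\doteq\langle\varphi_1|V_i^*V_j|\varphi_1\rangle$ recovers the Knill--Laflamme form (the Hermiticity of $(c_{ij})$ is automatic from $(V_i^*V_j)^*=V_j^*V_i$). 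Hence (\ref{operators+}) is equivalent to the code property when $\mathfrak{L}=\{V_i^*V_j\}_{i,j}$.

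Third, I would lift this from the particular spanning set $\{V_i^*V_j\}$ to an \emph{arbitrary} subset $\mathfrak{L}\subset\B(\H_A)$ with $\mathrm{lin}\,\mathfrak{L}=\mathcal{G}(\Phi)$. The two maps $A\mapsto\langle\varphi_l|A|\varphi_k\rangle$ and $A\mapsto\langle\varphi_l|A|\varphi_l\rangle-\langle\varphi_k|A|\varphi_k\rangle$ are linear in $A$; therefore each vanishes on all of $\mathcal{G}(\Phi)$ if and only if it vanishes on any spanning set. This shows that (\ref{operators+}) holds for the specific family $\{V_i^*V_j\}$ if and only if it holds for every such $\mathfrak{L}$, yielding both directions of the lemma.

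There is no substantive obstacle here: the statement is essentially a repackaging of Knill--Laflamme, and the only point requiring care is the extension step, which is immediate from linearity. The value in stating the lemma in this form is purely practical --- it lets one verify the code condition on any convenient spanning set of the noncommutative graph, which will be important in the explicit construction that follows.
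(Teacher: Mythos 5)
Your proposal is correct and matches the paper's (implicit) argument: the paper offers no written proof, simply asserting that the lemma follows from the Knill--Laflamme condition, which is precisely what you carry out by splitting $\langle\varphi_l|V_i^*V_j|\varphi_k\rangle=c_{ij}\delta_{lk}$ into its off-diagonal and diagonal parts and then passing to an arbitrary spanning set of $\mathcal{G}(\Phi)$ by linearity. Nothing further is needed.
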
\smallskip

Since a subspace $\mathfrak{L}$ of the algebra $\mathfrak{M}_n$ of
$n\times n$ matrices is a noncommutative graph of a particular
channel if and only if
\begin{equation}\label{L-cond}
\mathfrak{L}\;\,\textup{is
symmetric}\;\,(\mathfrak{L}=\mathfrak{L}^*)\;\,\textup{and contains
the unit matrix}
\end{equation}
(see Lemma 2 in \cite{Duan} or the Appendix in \cite{Sh&Sh}), Lemma
\ref{trans-l+} shows that one can "construct" a channel $\Phi$ with
$\dim\H_A=n$ having positive (correspondingly, zero) 1-shot quantum
zero-error  capacity by taking a subspace
$\mathfrak{L}\subset\mathfrak{M}_n$ satisfying (\ref{L-cond}) for
which the following condition is valid (correspondingly, not valid)
\begin{equation}\label{l-3-c}
\exists\varphi,\psi\in[\mathbb{C}^n]_1 \;\;\textup{s.t.}\;\; \langle
\psi|A|\varphi\rangle=0\;\;\textup{and}\;\; \langle
\varphi|A|\varphi\rangle=\langle \psi|A|\psi\rangle\quad\forall
A\in\mathfrak{L},
\end{equation}
where $[\mathbb{C}^n]_1$ is the unit sphere of $\mathbb{C}^n$.

\section{Example of 3-partite superactivation}

For given $\theta\in(-\pi,\pi]$ consider the 8-D subspace
\begin{equation}\label{L-def++}
\mathfrak{N}_{\theta} = \left\{M=\left[\begin{array}{cccc}
a &  b & e &  f\\
c &  d & f &  \bar{\gamma} e\\
g &  h & a &  b\\
h &  \gamma g & c &  d
\end{array}\right]\!,\;\; a,b,c,d,e,f,g,h\in\mathbb{C}\;\right\}
\end{equation}
of $\mathfrak{M}_4$ satisfying condition (\ref{L-cond}), where
$\gamma=\exp\left[\,\mathrm{i}\theta\,\right]$. This subspace is an
extension of the 4-D subspace $\mathfrak{L}_{\theta}$ used in
\cite{NBC}, i.e. $\mathfrak{L}_{\theta}\subset\mathfrak{N}_{\theta}$
for each $\theta$.\footnote{In contrast to this paper
$\,\gamma=\exp\left[\,\mathrm{i}\,\theta/2\,\right]\,$ is used in
\cite{NBC}.}

Denote by $\widehat{\mathfrak{N}}_{\theta}$ the set of all channels
whose noncommutative graph coincides with $\mathfrak{N}_{\theta}$.
In \cite[the Appendix]{Sh&Sh} it is shown how to explicitly
construct pseudo-diagonal channels in
$\widehat{\mathfrak{N}}_{\theta}$ with $d_A=4$ and $d_E\geq3$ (since
$\dim \mathfrak{N}_{\theta}=8\leq3^2$).
\medskip

\begin{theorem}\label{sqc}  \emph{Let  $\,\Phi_{\theta}$ be a  channel in $\,\widehat{\mathfrak{N}}_{\theta}$ and $n\in\mathbb{N}$ be arbitrary.}\medskip

A) \emph{$\bar{Q}_0(\Phi_{\theta})>0\;$  if and only if
$\;\theta=\pi\,$ and $\;\bar{Q}_0(\Phi_{\pi})=1\;$.}

\medskip

B) \emph{If $\;\theta_1+\ldots+\theta_n=\pi(\mathrm{mod}\,2\pi)\,$
then
$\;\bar{Q}_0(\Phi_{\theta_1}\otimes\ldots\otimes\Phi_{\theta_n})>0\;$
and $\,2\textrm{-}D$ error-correcting code for the channel
$\,\Phi_{\theta_1}\otimes\ldots\otimes\Phi_{\theta_n}$ is spanned by
the vectors
\begin{equation}\label{main-vec}
|\varphi\rangle=\textstyle{\frac{1}{\sqrt{2}}}\left[\;|1\ldots
1\rangle+\mathrm{i}\,|2\ldots 2\rangle\,\right],\;
|\psi\rangle=\textstyle{\frac{1}{\sqrt{2}}}\left[\;|3\ldots
3\rangle+\mathrm{i}\,|4\ldots 4\rangle\,\right]\!,\vspace{5pt}
\end{equation}
where $\{|1\rangle, \ldots, |4\rangle\}$ is the canonical basis in
$\mathbb{C}^4$.} \medskip

$\mathbf{C_2}$) \emph{If $\;|\theta_1|+|\theta_2|<\pi\,$ then
$\;\bar{Q}_0(\Phi_{\theta_1}\otimes\Phi_{\theta_2})=0\;$.}

\medskip

$\mathrm{C_n}$) \emph{If $\;|\theta_1|+\ldots+|\theta_n|\leq
2\ln(3/2)\,$ then
$\;\bar{Q}_0(\Phi_{\theta_1}\otimes\ldots\otimes\Phi_{\theta_n})=0\;$.}
\end{theorem}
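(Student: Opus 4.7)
The plan is to apply Lemma \ref{trans-l+} with the noncommutative graph of the tensor-product channel equal to $\mathfrak{N}_{\theta_1}\otimes\cdots\otimes\mathfrak{N}_{\theta_n}$; all four assertions then reduce to verifying or refuting condition (\ref{l-3-c}) with $\mathfrak{L}$ taken to be the set of product generators $M_1\otimes\cdots\otimes M_n$, for appropriate pairs of unit vectors $(\varphi,\psi)\in(\mathbb{C}^4)^{\otimes n}$.

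Part B is a direct verification. Because $|\varphi\rangle$ is supported on $\{|1\cdots1\rangle,|2\cdots2\rangle\}$ and $|\psi\rangle$ on $\{|3\cdots3\rangle,|4\cdots4\rangle\}$, only the entries at positions $(3,1),(3,2),(4,1),(4,2)$ of each $M_i$, namely $g_i,h_i,h_i,\gamma_ig_i$, enter $\langle\psi|M_1\otimes\cdots\otimes M_n|\varphi\rangle$, and an elementary computation gives $\tfrac12(1+\Gamma)\prod_ig_i$ with $\Gamma=\prod_i\gamma_i$, which vanishes precisely when $\sum_i\theta_i\equiv\pi\,(\mathrm{mod}\,2\pi)$. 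The equality $\langle\varphi|M|\varphi\rangle=\langle\psi|M|\psi\rangle$ then holds termwise, since the upper-left and lower-right $2\times 2$ diagonal blocks of the matrix in (\ref{L-def++}) coincide.

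For Part A the direction $\theta=\pi\Rightarrow\bar{Q}_0(\Phi_\theta)>0$ is Part B with $n=1$. For the converse I would analyze the full system (\ref{l-3-c}) for $\mathfrak{N}_\theta$ directly: decomposing $\mathbb{C}^4=\mathbb{C}^2\oplus\mathbb{C}^2$ according to the block structure of (\ref{L-def++}), the equations coming from the $A$-block parameters $a,b,c,d$ yield the rank-one identity $\varphi^{(1)}(\psi^{(1)})^{*}=-\varphi^{(2)}(\psi^{(2)})^{*}$ together with a matching diagonal identity, and the remaining equations from the extension parameters $e,f,g,h$ force a compatibility relation of the form $\varphi_1^2=\gamma\varphi_2^2$ which, combined with the diagonal equalities, is consistent only when $\gamma=-1$. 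The sharper statement $\bar{Q}_0(\Phi_\pi)=1$ should then follow by applying the 2-D analysis to every pair of vectors in a putative 3-D code and checking that the pairwise constraints are mutually incompatible.

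Parts $\mathrm{C}_2$ and $\mathrm{C}_n$ are the main obstacle. I expect them to be proved by contradiction: assuming that $(\varphi,\psi)$ form a code for the tensor-product graph, the equations from generators whose factors all lie in the $A$-block propagate the $n=1$ decomposition through the tensor structure, while those involving the extension parameters $g_i,h_i$ together with their $\gamma_i$-twists produce a transcendental identity in $\gamma_1,\ldots,\gamma_n$. For $n=2$ I anticipate that this identity is solvable in closed form and yields the sharp threshold $|\theta_1|+|\theta_2|<\pi$ matching Part B. For $n\ge 3$ the corresponding inequality must be controlled analytically as in \cite{NBC}, with the constant $2\ln(3/2)$ arising from a logarithmic bound on $|1+e^{i\theta}|$. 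The main technical point I foresee is the reduction of the multi-index tensor-product condition to an essentially one-dimensional inequality in the total angle $\sum_i|\theta_i|$; once this reduction is carried out, the passage from the graph $\mathfrak{L}_\theta$ used in \cite{NBC} to its extension $\mathfrak{N}_\theta$ only further restricts the set of admissible code pairs, so the non-existence arguments of \cite{NBC} apply a fortiori.
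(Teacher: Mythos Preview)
Your treatment of Part~B matches the paper's direct computation, and your closing remark about $\mathfrak{L}_\theta\subset\mathfrak{N}_\theta$ is exactly how the paper disposes of Part~A (the $\theta\neq\pi$ direction and the bound $\bar Q_0(\Phi_\pi)\le 1$) and of $\mathrm{C}_n$: it simply cites Theorem~1 of \cite{NBC} through this inclusion. So for those parts your plan is on target, though the separate $n=1$ analysis you sketch is unnecessary.

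The genuine gap is in $\mathrm{C}_2$. Your expectation that the code equations collapse to a single ``transcendental identity in $\gamma_1,\gamma_2$'' solvable in closed form does not reflect the structure of the problem, and the paper's proof proceeds very differently. Writing $\varphi=(x_1,x_2,x_3,x_4)^\top$, $\psi=(y_1,y_2,y_3,y_4)^\top$ with $x_i,y_i\in\mathbb{C}^4$, the $A$-block generators (which span all of $\mathfrak{M}_2\otimes\mathfrak{M}_2=\mathfrak{M}_4$) do not ``propagate the $n=1$ decomposition''; they give the \emph{operator} identities $\sum_i|y_i\rangle\langle x_i|=0$ and $\sum_i|x_i\rangle\langle x_i|=\sum_i|y_i\rangle\langle y_i|$, whose main consequence is a rank constraint forcing all $x_i,y_i$ into a common $2$-dimensional subspace of $\mathbb{C}^4$ (Lemma~\ref{dl}). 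The off-block generators then yield not one scalar relation but a system of bilinear constraints of the type $\langle y_4|U_k\otimes V_l|x_1\rangle=0$, etc., and the bulk of the argument is a classification (Lemmas~\ref{sys-l-1} and \ref{sys-l-2}) of all pairs $(z_1,z_4)$ and $(z_2,z_3)$ satisfying such four-equation systems into five explicit families, followed by a case-by-case elimination using (\ref{nbe-1})--(\ref{nbe-5}). The condition $|\theta_1|+|\theta_2|<\pi$ enters not through an angular identity but through the elementary fact (Lemma~\ref{sl}) that $1,\gamma_1,\gamma_2,\gamma_1\gamma_2$ are then extreme points of a convex polygon, so that $\langle y|U_1\otimes V_1|y\rangle\neq 0$ for every nonzero $y$. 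Without this dimensional reduction plus classification machinery, your outline has no mechanism to rule out codes, and the anticipated ``closed-form identity'' simply does not materialize.
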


\medskip

Assertion $\mathrm{C_2}$ is the main progress of this theorem in
comparison with Theorem 1 in \cite{NBC}. It complements assertion B
with $n=2$. It is the proof of assertion $\mathrm{C_2}$ that
motivates the extension
$\mathfrak{L}_{\theta}\rightarrow\mathfrak{N}_{\theta}$.\smallskip

\begin{remark}\label{sqc-r} Since assertion $\mathrm{C_n}$ is proved by using quite coarse
estimates, the other assertions of Theorem 1 make it reasonable to
conjecture validity of the following strengthened version:\smallskip

$\mathrm{C^{*}_n}$) \emph{If $\;|\theta_1|+\ldots+|\theta_n|< \pi\,$
then
$\;\bar{Q}_0(\Phi_{\theta_1}\otimes\ldots\otimes\Phi_{\theta_n})=0.\;$}
\smallskip

\noindent The below proof of  $\mathrm{C_2}$ shows difficulty of the
direct proof of this conjecture.
\end{remark}

\medskip

Theorem \ref{sqc} implies the following  example of $3$-partite
superactivation of 1-shot quantum zero-error capacity.
\smallskip

\begin{corollary}\label{sqc-c} \emph{Let $\,\theta_1,\theta_2,\theta_3$ be positive numbers
such that $\,\theta_1+\theta_2+\theta_3=\pi$. Then
$$
\bar{Q}_0(\Phi_{\theta_1}\otimes
\Phi_{\theta_2}\otimes\Phi_{\theta_3})>0\quad \textit{while}
\quad\bar{Q}_0(\Phi_{\theta_i}\otimes \Phi_{\theta_j})=0\quad
\forall\, i\neq j.
$$
$\,2\textrm{-}D$ error-correcting code for the channel
$\,\Phi_{\theta_1}\otimes \Phi_{\theta_2}\otimes\Phi_{\theta_3}$ is
spanned by the vectors}
\begin{equation}\label{main-vec+}
|\varphi\rangle=\textstyle{\frac{1}{\sqrt{2}}}\left[\;|111\rangle+\mathrm{i}\,|222\rangle\,\right],\;
|\psi\rangle=\textstyle{\frac{1}{\sqrt{2}}}\left[\;|333\rangle+\mathrm{i}\,|444\rangle\,\right]\!.\vspace{5pt}
\end{equation}
\end{corollary}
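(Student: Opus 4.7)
The plan is to derive the corollary as a direct specialization of Theorem~\ref{sqc}, using only assertions~B (with $n=3$) and $\mathrm{C_2}$. There is no genuine obstacle here: the content of the corollary is essentially a consistency check that the positive and negative assertions of the theorem can be simultaneously witnessed by the same triple $(\theta_1,\theta_2,\theta_3)$, and the hypothesis $\theta_1+\theta_2+\theta_3=\pi$ with $\theta_i>0$ is exactly tailored so that all three pairwise sums fall in the range handled by $\mathrm{C_2}$ while the triple sum falls in the range handled by~B.

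First I would establish the positive part. Since $\theta_1+\theta_2+\theta_3=\pi\equiv\pi\,(\mathrm{mod}\,2\pi)$, assertion~B of Theorem~\ref{sqc} applies directly with $n=3$, giving $\bar{Q}_0(\Phi_{\theta_1}\otimes\Phi_{\theta_2}\otimes\Phi_{\theta_3})>0$ and identifying the 2-dimensional error-correcting code as the span of the vectors $|\varphi\rangle$ and $|\psi\rangle$ displayed in (\ref{main-vec+}); these are exactly the specialization of (\ref{main-vec}) to $n=3$.

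Next I would handle the bipartite vanishing. Fix any pair $i\neq j$ and let $k$ be the remaining index. Because $\theta_1,\theta_2,\theta_3$ are strictly positive with $\theta_1+\theta_2+\theta_3=\pi$, one has
\begin{equation*}
|\theta_i|+|\theta_j|=\theta_i+\theta_j=\pi-\theta_k<\pi,
\end{equation*}
so assertion $\mathrm{C_2}$ of Theorem~\ref{sqc} applies and yields $\bar{Q}_0(\Phi_{\theta_i}\otimes\Phi_{\theta_j})=0$. Since the pair was arbitrary, this covers all $i\neq j$. Combining the two parts gives all the assertions of the corollary, completing the proof.
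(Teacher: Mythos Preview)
Your proposal is correct and is exactly the approach the paper takes: the corollary is stated as an immediate consequence of Theorem~\ref{sqc}, with assertion~B (for $n=3$) providing the positive tripartite capacity together with the explicit code~(\ref{main-vec+}), and assertion~$\mathrm{C_2}$ yielding the vanishing bipartite capacities via the observation $\theta_i+\theta_j=\pi-\theta_k<\pi$.
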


If conjecture $\mathrm{C^{*}_n}$ in Remark \ref{sqc} is valid for
some $n>2$ then the similar assertion holds for $\,n+1\,$ channels
$\Phi_{\theta_1},\ldots,\Phi_{\theta_{n+1}}$. This would give an
example of $(n+1)$-partite superactivation of 1-shot  quantum
zero-error capacity.\medskip

For each $\theta$ one can choose (non-uniquely) a basis
$\{M^{\theta}_k\}_{k=1}^8$ of the subspace $\mathfrak{N}_{\theta}$
consisting of positive operators such that $\sum_{k=1}^8
M^{\theta}_k=I_{\H_A}$ (since the subspace $\mathfrak{N}_{\theta}$
satisfies condition (\ref{L-cond}), see \cite{Sh&Sh}). This basis
can be considered as a quantum observable $\mathcal{M}_{\theta}$. By
using Proposition 1 in \cite{Sh&Sh} and Lemma \ref{trans-l+}
Corollary \ref{sqc-c} can be reformulated in terms of theory of
quantum measurements.\smallskip

\begin{corollary}\label{sqc-c+}
\emph{Let $\,\theta_1,\theta_2,\theta_3$ be positive numbers such
that $\,\theta_1+\theta_2+\theta_3=\pi$. Then all the observables
$\mathcal{M}_{\theta_i}$ and all the bipartite observables
$\mathcal{M}_{\theta_i}\otimes \mathcal{M}_{\theta_j}$ have no
indistinguishable subspaces but the tripartite observable
$\mathcal{M}_{\theta_1}\otimes \mathcal{M}_{\theta_2}\otimes
\mathcal{M}_{\theta_3}$ has indistinguishable subspace spanned by
the vectors (\ref{main-vec+}).}\footnote{We call a subspace $\H_0$
\emph{indistinguishable} for an observable $\mathcal{M}$ if
application of $\mathcal{M}$ to all states supported by $\H_0$ leads
to the same outcomes probability distribution.}
\end{corollary}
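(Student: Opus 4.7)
The plan is to reduce this observable-theoretic statement to Corollary~\ref{sqc-c} via the correspondence (Proposition~1 of \cite{Sh&Sh}) between indistinguishable subspaces of observables and error-correcting codes of channels. The entire substantive work is already contained in Theorem~\ref{sqc}; what remains is a dictionary translation.

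Concretely, I would first note that a subspace $\H_0$ of dimension $\geq 2$ is indistinguishable for an observable $\{M_k\}$ if and only if the projector $P$ onto $\H_0$ satisfies $PM_kP=c_k P$ for every $k$. By linearity this extends to every $A$ in the linear span of $\{M_k\}$; choosing any orthonormal basis $\{\varphi_i\}$ of $\H_0$, it becomes precisely the Knill--Laflamme condition (\ref{operators+}) with $\mathfrak{L}$ equal to that span. By construction the span of the POVM elements of $\M_\theta$ is $\mathfrak{N}_\theta$, so Lemma~\ref{trans-l+} identifies indistinguishable subspaces of $\M_\theta$ with error-correcting codes of any channel $\Phi_\theta\in\widehat{\mathfrak{N}}_\theta$.

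The same argument handles tensor products: the POVM elements of $\M_{\theta_1}\otimes\cdots\otimes\M_{\theta_n}$ are the products $M^{\theta_1}_{k_1}\otimes\cdots\otimes M^{\theta_n}_{k_n}$, whose linear span equals $\mathfrak{N}_{\theta_1}\otimes\cdots\otimes\mathfrak{N}_{\theta_n}$, i.e. the noncommutative graph of $\Phi_{\theta_1}\otimes\cdots\otimes\Phi_{\theta_n}$. Hence, at the level of subspaces of dimension $\geq 2$, an indistinguishable subspace for the product observable is the same object as an error-correcting code for the product channel, with the same subspace serving as witness in either formulation.

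With this equivalence in hand, the three claims translate immediately: $\bar{Q}_0(\Phi_{\theta_i})=0$ for each $i$ by Theorem~\ref{sqc}(A), since each $\theta_i\in(0,\pi)$; $\bar{Q}_0(\Phi_{\theta_i}\otimes\Phi_{\theta_j})=0$ by assertion $\mathrm{C_2}$, since $\theta_i+\theta_j<\pi$; and the subspace spanned by the vectors (\ref{main-vec+}) is an error-correcting code for $\Phi_{\theta_1}\otimes\Phi_{\theta_2}\otimes\Phi_{\theta_3}$ by assertion B (with $n=3$ and $\theta_1+\theta_2+\theta_3=\pi$). There is no genuine obstacle; the only book-keeping care-point is to restrict attention to subspaces of dimension $\geq 2$, since any 1-D subspace is tautologically indistinguishable and corresponds to the vacuous code contributing zero to $\bar{Q}_0$.
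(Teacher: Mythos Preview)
Your proposal is correct and follows essentially the same route as the paper: the paper simply states that Corollary~\ref{sqc-c} ``can be reformulated in terms of theory of quantum measurements'' by invoking Proposition~1 in \cite{Sh&Sh} together with Lemma~\ref{trans-l+}, and you have spelled out precisely that dictionary translation (including the tensor-product case and the $\dim\geq 2$ caveat).
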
\smallskip

Note also that Theorem 1 implies the following example of
superactivation of 2-shot  quantum zero-error capacity. \smallskip

\begin{corollary}\label{sqc-c+} \emph{Let $\,\theta_1,\theta_2$ be positive numbers such that $\,\theta_1+\theta_2=\pi/2$. Then}
$$
\bar{Q}_0\left([\Phi_{\theta_1}\otimes
\Phi_{\theta_2}]^{\otimes2}\right)>0\quad \textit{while}
\quad\bar{Q}_0\left(\Phi_{\theta_1}^{\otimes2}\right)=\bar{Q}_0\left(\Phi_{\theta_2}^{\otimes2}\right)=\bar{Q}_0\left(\Phi_{\theta_1}\otimes
\Phi_{\theta_2}\right)=0.
$$
\end{corollary}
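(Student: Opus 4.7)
The plan is to derive the corollary directly from Theorem \ref{sqc}, treating $[\Phi_{\theta_1}\otimes \Phi_{\theta_2}]^{\otimes 2}$ as the four-fold tensor product $\Phi_{\theta_1}\otimes \Phi_{\theta_2}\otimes \Phi_{\theta_1}\otimes \Phi_{\theta_2}$. This identification is legitimate because the 1-shot quantum zero-error capacity is invariant under permutation of tensor factors (a swap of input and output systems is implemented by unitaries, which preserve error-correcting codes).

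For the positive part, I would invoke assertion B with $n=4$ and angle list $(\theta_1,\theta_2,\theta_1,\theta_2)$. The hypothesis of B requires the angles to sum to $\pi\pmod{2\pi}$; here the sum is $2(\theta_1+\theta_2)=\pi$, so B applies and yields a 2-D error-correcting code spanned by the analogues of (\ref{main-vec}) on $(\mathbb{C}^4)^{\otimes 4}$. In particular $\bar{Q}_0([\Phi_{\theta_1}\otimes \Phi_{\theta_2}]^{\otimes 2})\geq 1>0$.

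For the three vanishing statements, I would apply assertion $\mathrm{C_2}$ with the appropriate pair of angles. Since $\theta_1,\theta_2>0$ and $\theta_1+\theta_2=\pi/2$, both $\theta_i$ lie strictly in $(0,\pi/2)$, so $2\theta_i<\pi$ for $i=1,2$, giving $\bar{Q}_0(\Phi_{\theta_i}^{\otimes 2})=0$. For the mixed term, $|\theta_1|+|\theta_2|=\pi/2<\pi$, so $\mathrm{C_2}$ yields $\bar{Q}_0(\Phi_{\theta_1}\otimes \Phi_{\theta_2})=0$.

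There is essentially no obstacle here once Theorem \ref{sqc} is in hand; the only point worth checking carefully is the reordering of tensor factors and the verification that the strict inequalities needed by $\mathrm{C_2}$ all follow from the positivity of $\theta_1$ and $\theta_2$ together with the constraint $\theta_1+\theta_2=\pi/2$.
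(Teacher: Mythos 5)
Your proposal is correct and is exactly the intended derivation: the paper states this corollary as an immediate consequence of Theorem \ref{sqc}, obtained by applying assertion B with $n=4$ to the angle list $(\theta_1,\theta_2,\theta_1,\theta_2)$ (sum $=\pi$) for positivity, and assertion $\mathrm{C_2}$ to the pairs $(\theta_i,\theta_i)$ and $(\theta_1,\theta_2)$ (each with absolute sum $<\pi$) for the vanishing claims. Your extra remark on permutation-invariance of the tensor factors is a sound, if minor, point of care.
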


\emph{The proof of Theorem \ref{sqc}.} The equality
$\,\bar{Q}_0(\Phi_{\theta})=0\,$ for $\,\theta\neq\pi,\,$ the
inequality $\,\bar{Q}_0(\Phi_{\pi})\leq1\,$ and assertion
$\mathrm{C_n}$  follows from Theorem 1 in \cite{NBC}, since the
inclusion $\mathfrak{L}_{\theta}\subset\mathfrak{N}_{\theta}$
implies
$\,\bar{Q}_0(\Phi_{\theta_1}\otimes\ldots\otimes\Phi_{\theta_n})\leq\bar{Q}_0(\Psi_{\theta_1}\otimes\ldots\otimes\Psi_{\theta_n})\,$
for any channels
$\Psi_{\theta_1}\in\widehat{\mathfrak{L}}_{\theta_1}$,...,$\Psi_{\theta_n}\in\widehat{\mathfrak{L}}_{\theta_n}$.

To prove that $\,\bar{Q}_0(\Phi_{\pi})\geq1\,$ it suffices to show,
by using Lemma \ref{trans-l+}, that the vectors
$|\varphi\rangle=[1,\mathrm{i},0,0]^{\top}$ and
$|\psi\rangle=[0,0,1,\mathrm{i}]^{\top}$ generate error-correcting
code for the channel $\Phi_{\pi}$.\medskip

B) Let
$M_1\in\mathfrak{N}_{\theta_1},\ldots,M_n\in\mathfrak{N}_{\theta_n}$
be arbitrary, $X=M_1\otimes \ldots\otimes M_n$ and $\,\varphi,\,
\psi\,$ be the vectors defined in (\ref{main-vec}). By Lemma
\ref{trans-l+} it suffices to show  that
\begin{equation}\label{mps-one}
   \langle\psi | X | \varphi\rangle=0\quad\textrm{and}\quad
   \langle\psi | X |\psi\rangle=\langle\varphi | X | \varphi\rangle.
\end{equation}
Let $a_k,b_k,...,h_k$ be elements of the matrix $M_k$. We have
$$
\begin{array}{c}
2\langle\psi | X | \varphi\rangle= \langle 3\ldots 3| X|1\ldots
1\rangle+\mathrm{i}\langle 3\ldots 3| X|2\ldots
2\rangle-\mathrm{i}\langle 4\ldots 4| X|1\ldots 1\rangle\\\\+\langle
4\ldots 4|X|2\ldots 2\rangle= g_1 \ldots g_n(1+\gamma_1
\ldots\gamma_n)+h_1 \ldots h_n(\mathrm{i}-\mathrm{i})=0,
\end{array}
$$
since $\gamma_1 \ldots\gamma_n=-1\,$ by the condition
$\,\theta_1+\ldots+\theta_n=\pi(\mathrm{mod}\,2\pi)$,
$$
\begin{array}{c}
2\langle\varphi | X | \varphi\rangle= \langle 1\ldots 1| X|1\ldots
1\rangle+\mathrm{i}\langle 1\ldots 1| X|2\ldots
2\rangle-\mathrm{i}\langle 2\ldots 2| X|1\ldots 1\rangle\\\\+\langle
2\ldots 2|X|2\ldots 2\rangle=a_1\ldots a_n+\mathrm{i}\,(b_1\ldots
b_n-c_1\ldots c_n)+d_1\ldots d_n
\end{array}
$$
and
$$
\begin{array}{c}
2\langle\psi | X | \psi\rangle= \langle 3\ldots 3| X|3\ldots
3\rangle+\mathrm{i}\langle 3\ldots 3| X|4\ldots
4\rangle-\mathrm{i}\langle 4\ldots 4| X|3\ldots 3\rangle\\\\+\langle
4\ldots 4|X|4\ldots 4\rangle=a_1\ldots a_n+\mathrm{i}\,(b_1\ldots
b_n-c_1\ldots c_n)+d_1\ldots d_n.
\end{array}
$$
Thus the both equalities in (\ref{mps-one}) are valid.
\medskip

$\mathrm{C_2}$) To prove this assertion we have to show that the
subspace $\mathfrak{N}_{\theta_1}\otimes\mathfrak{N}_{\theta_2}$
does not satisfy condition (\ref{l-3-c}) if
$\,|\theta_1|+|\theta_2|<\pi$. In the case $\,\theta_1=\theta_2=0$
this follows from assertion $\mathrm{C_n}$. So, we may assume, by
symmetry, that $\theta_2\neq0$.
\smallskip

Throughout the proof we will use the isomorphism
\begin{equation*}
\mathbb{C}^n\otimes\mathbb{C}^m\ni x\otimes y \;\leftrightarrow\;
[x_1y,\ldots,x_ny]^{\top}\in\underbrace{\mathbb{C}^m\oplus\ldots\oplus\mathbb{C}^m}_n
\end{equation*}
and the corresponding isomorphism
\begin{equation}\label{iso}
    \mathfrak{M}_n\otimes\mathfrak{M}_m\ni A\otimes B \;\leftrightarrow\;
[a_{ij}B]\in\mathfrak{M}_{nm}.
\end{equation}

Let $U_1,U_2,V_1,V_2$ be the unitary operators in $\mathbb{C}^2$
determined (in the canonical basis) by the matrices
$$U_1
=\left[\begin{array}{rr}
1 &  0 \\
0  &  \gamma_1
\end{array}\right],\quad
V_1 =\left[\begin{array}{rr}
1 &  0 \\
0  &  \gamma_2
\end{array}\right],\quad
U_2=V_2=\left[\begin{array}{rr}
0 &  1 \\
1  &  0
\end{array}\right].
$$

We will identify $\mathbb{C}^4$ with
$\mathbb{C}^2\oplus\mathbb{C}^2$. So, arbitrary matrices
$M_1\in\mathfrak{N}_{\theta_1}$ and $M_2\in\mathfrak{N}_{\theta_2}$
can be represented as follows
$$
M_1=\left[\begin{array}{cc} A_1 &
 e_1U^*_1+f_1U^*_2\\ g_1U_1+h_1U_2 & A_1\end{array}\right],\;M_2=\left[\begin{array}{cc} A_2 &
 e_2V^*_1+f_2V^*_2\\ g_2V_1+h_2V_2 & A_2\end{array}\right]
$$
or, according to (\ref{iso}), as
$$
M_1=I_2\otimes A_1 +|2\rangle \langle
1|\otimes[g_1U_1+h_1U_2]+|1\rangle\langle
2|\otimes[e_1U^*_1+f_1U^*_2]
$$
and
$$
M_2=I_2\otimes A_2 +|2\rangle \langle
1|\otimes[g_2V_1+h_2V_2]+|1\rangle\langle
2|\otimes[e_2V^*_1+f_2V^*_2],
$$
where $A_1$ and $A_2$ are arbitrary matrices in $\mathfrak{M}_2$.

Assume the existence of orthogonal unit vectors $\varphi$ and $\psi$
in $\mathbb{C}^4\otimes\mathbb{C}^4$ such that
\begin{equation}\label{b-eq}
\langle\psi| M_1\otimes M_2|\varphi\rangle=0\quad\textup{and}\quad
\langle\psi| M_1\otimes M_2|\psi\rangle=\langle\varphi| M_1\otimes
M_2|\varphi\rangle
\end{equation}
for all $M_1\in\mathfrak{N}_{\theta_1}$ and
$M_2\in\mathfrak{N}_{\theta_2}$.

By using the above representations of $M_1$ and $M_2$ we have
$$
\begin{array}{c}
M_1\otimes M_2=[I_2\otimes I_2]\otimes[A_1\otimes A_2]+[I_2\otimes
|2\rangle \langle 1|]\otimes[A_1\otimes [g_2V_1+h_2V_2]]+ \\\\\!
[I_2\otimes |1\rangle \langle 2|]\otimes[A_1\otimes
[e_2V^*_1+f_2V^*_2]]+[|2\rangle \langle 1|\otimes
I_2]\otimes[[g_1U_1+h_1U_2]\otimes A_2]+...
\end{array}
$$

Since $\mathfrak{M}_2\otimes\mathfrak{M}_2=\mathfrak{M}_4$, by
choosing $e_i=f_i=g_i=h_i=0$, $i=1,2$, we obtain from (\ref{b-eq})
that
\begin{equation*}
\langle\psi| I_4\otimes A|\varphi\rangle=0\quad\textup{and}\quad
\langle\psi|I_4\otimes A |\psi\rangle=\langle\varphi| I_4\otimes A
|\varphi\rangle\quad \forall A\in\mathfrak{M}_4.
\end{equation*}
According to (\ref{iso}) we have
\begin{equation*}
 I_4\otimes A=\left[\begin{array}{cccc}
A &  0 & 0 &  0\\
0 &  A & 0 &  0\\
0 &  0 & A &  0\\
0 &  0 & 0 &  A
\end{array}\right]\!,\quad |\varphi\rangle=
\left[\begin{array}{c}
x_1 \\
x_2\\
x_3\\
x_4
\end{array}\right]\!,\quad
|\psi\rangle= \left[\begin{array}{c}
y_1 \\
y_2\\
y_3\\
y_4
\end{array}\right]\!,
\end{equation*}
where $x_i, y_i$ are vectors in $\mathbb{C}^4$. So, the above
relations can be written as the following ones
\begin{equation*}
\sum_{i=1}^4\langle y_i|A|x_i\rangle=0\quad \textup{and} \quad
\sum_{i=1}^4\langle y_i|A|y_i\rangle=\sum_{i=1}^4\langle
x_i|A|x_i\rangle\quad\forall A\in\mathfrak{M}_4
\end{equation*}
which are equivalent to the  operator equalities
\begin{equation}\label{b-eq-2}
\sum_{i=1}^4|y_i\rangle\langle x_i|=0.
\end{equation}
and
\begin{equation}\label{b-eq-1}
\sum_{i=1}^4|y_i\rangle\langle y_i|=\sum_{i=1}^4|x_i\rangle\langle
x_i|
\end{equation}

By choosing $e_i=f_i=g_1=h_1=0$, $i=1,2$, $A_2=0$, $(g_2,h_2)=(1,0)$
and $(g_2,h_2)=(0,1)$ we obtain from (\ref{b-eq}) that
$$
\langle\psi| [I_2\otimes |2\rangle \langle 1|]\otimes[A_1\otimes
V_k]|\varphi\rangle=0
$$
and
$$
\langle\psi|[I_2\otimes |2\rangle \langle 1|]\otimes[A_1\otimes V_k]
|\psi\rangle=\langle\varphi|[I_2\otimes |2\rangle \langle
1|]\otimes[A_1\otimes V_k]|\varphi\rangle
$$
for all $A_1$ in $\mathfrak{M}_2$ and $k=1,2$. According to
(\ref{iso}) we have
$$
[I_2\otimes |2\rangle \langle 1|]\otimes[A_1\otimes
V_k]=\left[\begin{array}{cccc}
0 &  0 & 0 &  0\\
A_1\otimes V_k &  0 & 0 & 0\\
0 &  0 & 0 &  0\\
0 &  0 & A_1\otimes V_k &  0
\end{array}\right]
$$
and hence the above equalities imply
\begin{equation}\label{b-eq-3}
\langle y_2|A\otimes V_k|x_1\rangle+\langle y_4|A\otimes
V_k|x_3\rangle=0\quad\forall A\in\mathfrak{M}_2, \; k=1,2,
\end{equation}
and
\begin{equation}\label{b-eq-4}
\begin{array}{l}
\langle y_2|A\otimes V_k|y_1\rangle+\langle y_4|A\otimes
V_k|y_3\rangle=\\\\\langle x_2|A\otimes V_k|x_1\rangle+\langle
x_4|A\otimes V_k|x_3\rangle\qquad\;\forall A\in\mathfrak{M}_2, \;
k=1,2.
\end{array}
\end{equation}

Similarly, by choosing $e_i=f_i=g_2=h_2=0$, $i=1,2$, $A_1=0$,
$(g_1,h_1)=(1,0)$ and $(g_1,h_1)=(0,1)$ we obtain from (\ref{b-eq})
the equalities
\begin{equation}\label{b-eq-5}
\langle y_3|U_k\otimes A|x_1\rangle+\langle y_4|U_k\otimes
A|x_2\rangle=0, \!\quad\forall  A\in\mathfrak{M}_2, \; k=1,2,
\end{equation}
and
\begin{equation}\label{b-eq-6}
\begin{array}{l}
\langle y_3|U_k\otimes A|y_1\rangle+\langle y_4|U_k\otimes
A|y_2\rangle=\\\\ \langle x_3|U_k\otimes A|x_1\rangle+\langle
x_4|U_k\otimes A|x_2\rangle,\qquad\; \forall A\in\mathfrak{M}_2,\;
k=1,2.
\end{array}
\end{equation}

By the symmetry of condition (\ref{b-eq}) with respect to $\varphi$
and $\psi$ relations (\ref{b-eq-3}) and (\ref{b-eq-5}) imply
respectively
\begin{equation}\label{b-eq-3+}
\langle x_2|A\otimes V_k|y_1\rangle+\langle x_4|A\otimes
V_k|y_3\rangle=0\quad\forall A\in\mathfrak{M}_2, \; k=1,2,
\end{equation}
and
\begin{equation}\label{b-eq-5+}
\langle x_3|U_k\otimes A|y_1\rangle+\langle x_4|U_k\otimes
A|y_2\rangle=0, \!\quad\forall  A\in\mathfrak{M}_2, \; k=1,2.
\end{equation}

Finally, by choosing $A_1=A_2=0$ and appropriate values of
$e_i,f_i,g_i,h_i$, $i=1,2$, one can obtain from (\ref{b-eq}) the
following equalities
\begin{equation}\label{b-eq-7}
\langle y_4|U_k\otimes V_l\,|x_1\rangle=\langle x_4|U_k\otimes
V_l\,|y_1\rangle=0\qquad\;\; k,l=1,2,
\end{equation}
\begin{equation}\label{b-eq-8}
\langle y_4|U_k\otimes V_l\,|y_1\rangle\,=\langle x_4|U_k\otimes
V_l\,|x_1\rangle,\qquad\qquad k,l=1,2,
\end{equation}
\begin{equation}\label{b-eq-9}
\langle y_3|U_k\otimes V^*_l|x_2\rangle=\langle x_3|U_k\otimes
V^*_l|y_2\rangle=0\;\,\,\qquad k,l=1,2,
\end{equation}
\begin{equation}\label{b-eq-10}
\langle y_3|U_k\otimes V^*_l|y_2\rangle=\langle x_3|U_k\otimes
V^*_l|x_2\rangle,\qquad\qquad k,l=1,2.
\end{equation}

We will prove below that the system (\ref{b-eq-2})-(\ref{b-eq-10})
has no nontrivial solutions.\medskip

We will use the following lemmas. \smallskip

\begin{lemma}\label{dl}
A) \emph{Equations (\ref{b-eq-2}) and  (\ref{b-eq-1}) imply that all
the vectors $\,x_i, y_i$, $i=\overline{1,4}$, lie in some 2-D
subspace of $\,\mathbb{C}^4$.}

B) \emph{If $\,x_{i_0}=y_{i_0}=0\,$ for some $i_0$ then equations
(\ref{b-eq-2}) and  (\ref{b-eq-1}) imply that all the vectors
$\,x_i, y_i$, $i=\overline{1,4}$, are collinear.}
\end{lemma}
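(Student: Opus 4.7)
My plan is to repackage equations (\ref{b-eq-2}) and (\ref{b-eq-1}) as matrix identities for the $4\times 4$ matrices $X=[x_{1}\,|\,x_{2}\,|\,x_{3}\,|\,x_{4}]$ and $Y=[y_{1}\,|\,y_{2}\,|\,y_{3}\,|\,y_{4}]$ whose columns are the given vectors. A direct translation of the bra--ket sums gives the pair
$$
YX^{*}=0\qquad\text{and}\qquad XX^{*}=YY^{*},
$$
reducing the whole lemma to elementary linear algebra on these two matrices.

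The first step is to read off from $XX^{*}=YY^{*}$ that the column spaces of $X$ and $Y$ coincide, since for any matrix $M$ one has $\mathrm{range}(MM^{*})=\mathrm{range}(M)$. Call this common subspace $W\subseteq\mathbb{C}^{4}$ and set $r=\dim W=\rank X=\rank Y$. Because every $x_{i}$ and every $y_{i}$ already lies in $W$, assertion A reduces to the bound $r\le 2$.

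The second step is to extract a dimension inequality from $YX^{*}=0$. Taking adjoints also gives $XY^{*}=0$, whence $\mathrm{range}(X^{*})\subseteq\ker Y=\mathrm{range}(Y^{*})^{\perp}$. Since $\dim\mathrm{range}(X^{*})=\dim\mathrm{range}(Y^{*})=r$, two mutually orthogonal $r$-dimensional subspaces of $\mathbb{C}^{4}$ force $2r\le 4$, i.e.\ $r\le 2$, which proves A. For B, the hypothesis $x_{i_{0}}=y_{i_{0}}=0$ kills the $i_{0}$-th column of $X$ and of $Y$, hence the $i_{0}$-th row of $X^{*}$ and of $Y^{*}$, so $\mathrm{range}(X^{*})$ and $\mathrm{range}(Y^{*})$ both sit inside the $3$-dimensional coordinate hyperplane $\{v\in\mathbb{C}^{4}:v_{i_{0}}=0\}$. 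The same orthogonality count rerun in dimension $3$ now forces $2r\le 3$, i.e.\ $r\le 1$, so $W$ is at most a line and all the vectors are collinear.

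I do not expect a serious obstacle here: once the bra--ket sums in (\ref{b-eq-2})--(\ref{b-eq-1}) are recognised as the matrix products $YX^{*}$ and $YY^{*}$, both parts collapse to a single rank/orthogonality count on subspaces of $\mathbb{C}^{4}$. The only bookkeeping point requiring attention is the complex conjugation when passing between $\sum_{i}|y_{i}\rangle\langle x_{i}|$ and $YX^{*}$, so that one ends up with the adjoint of $X$ on the right and not its transpose.
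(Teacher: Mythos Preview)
Your proof is correct. Both you and the paper translate (\ref{b-eq-2})--(\ref{b-eq-1}) into matrix identities and finish with a rank/orthogonality count, but the bookkeeping is dual. The paper forms the \emph{Gram} matrices $X=[\langle x_i|x_j\rangle]$, $Y=[\langle y_i|y_j\rangle]$, $Z=[\langle x_i|y_j\rangle]$ (i.e.\ your $X^{*}X$, $Y^{*}Y$, $X^{*}Y$), shows that (\ref{b-eq-2}) gives $XY=0$ while (\ref{b-eq-1}) gives $X^{2}=ZZ^{*}$ and $Y^{2}=Z^{*}Z$, and hence $\rank X=\rank Y\le 2$; part B is then the same argument with $3\times3$ Gram matrices. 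You work instead on the ``outer'' side with $XX^{*}$, $YY^{*}$, $YX^{*}$. Your route is slightly more direct, since the equality of the spans of $\{x_i\}$ and $\{y_i\}$ and the rank equality $\rank X=\rank Y$ drop out immediately from $XX^{*}=YY^{*}$ without the auxiliary identities $X^{2}=ZZ^{*}$, $Y^{2}=Z^{*}Z$; the paper's Gram formulation, on the other hand, makes part B a literal rerun with one index deleted, with no need for the hyperplane observation. Either way the content is the same orthogonal-subspace dimension count.
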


\begin{proof} A) Consider the  $4\times4$ - matrices
$$
X=[\langle x_i|x_j\rangle],\quad Y=[\langle y_i|y_j\rangle],\quad
Z=[\langle x_i|y_j\rangle].
$$
It is easy to see that (\ref{b-eq-2}) implies $XY=0$ while
(\ref{b-eq-1}) shows that $X^2=ZZ^*$  and $Y^2=Z^*Z$. Hence
$\,\mathrm{rank}X=\mathrm{rank}Y\leq2$.

Since (\ref{b-eq-1}) implies that the sets $\{x_i\}_{i=1}^4$ and
$\{y_i\}_{i=1}^4$ have the same linear hull, the above inequality
shows that this linear hull has dimension $\leq2$.\smallskip

B) This assertion is proved similarly, since the same argumentation
with $3\times3$ - matrices $X,Y,Z$ implies
$\,\mathrm{rank}X=\mathrm{rank}Y\leq1$.
\end{proof}

\begin{lemma}\label{sys-l-1} A) \emph{The  condition
\begin{equation}\label{b-eq-7+}
\langle z_4|U_k\otimes V_l|z_1\rangle=0\qquad\;\;\, k,l=1,2,\quad
\end{equation}
holds if and only if the pair $(z_1,z_4)$ has one of the following
forms:
$$
\!\!\!\begin{array}{l} 1)\; z_1=\left[\begin{array}{c}
\!\mu_1\! \\
\!s\!
\end{array}\right]\!\otimes\!\left[\begin{array}{c}
\!a\! \\
\!b\!
\end{array}\right]\!, \; z_4=\left[\begin{array}{c}
\!\bar{\mu}_1\! \\
\!\!-s\!\!
\end{array}\right]\!\otimes\!\left[\begin{array}{c}
\!c\! \\
\!d\!
\end{array}\right]\!;\\\\
2)\;z_1=\left[\begin{array}{c}
\!a\! \\
\!b\!
\end{array}\right]\!\otimes\!\left[\begin{array}{c}
\!\mu_2\! \\
\!s\!
\end{array}\right]\!, \; z_4=\left[\begin{array}{c}
\!c\! \\
\!d\!
\end{array}\right]\!\otimes\!\left[\begin{array}{c}
\!\bar{\mu}_2\! \\
\!\!-s\!\!
\end{array}\right]\!;\\\\
3)\;z_1=a\left[\begin{array}{c}
\!\mu_1\! \\
\!1\!
\end{array}\right]\!\otimes\!\left[\begin{array}{c}
\!\mu_2\! \\
\!s\!
\end{array}\right]+
b\left[\begin{array}{c}
\!\mu_1\! \\
\!\!-1\!\!
\end{array}\right]\!\otimes\!\left[\begin{array}{c}
\!\mu_2\! \\
\!\!-s\!\!
\end{array}\right]\!,
\; z_4=c\left[\begin{array}{c}
\!\bar{\mu}_1\! \\
\!1\!
\end{array}\right]\!\otimes\!\left[\begin{array}{c}
\!\bar{\mu}_2\! \\
\!\!-s\!
\end{array}\right]+
d\left[\begin{array}{c}
\!\bar{\mu}_1\! \\
\!\!-1\!\!
\end{array}\right]\!\otimes\!\left[\begin{array}{c}
\!\bar{\mu}_2\! \\
\!s\!
\end{array}\right]\!;\\\\
4)\;z_1=h\left[\begin{array}{c}
\!\mu_1\! \\
\!s\!
\end{array}\right]\!\otimes\!\left[\begin{array}{c}
\!\mu_2\! \\
\!t\!
\end{array}\right]\!,
\; z_4=\left[\begin{array}{c}
\!\bar{\mu}_1\! \\
\!\!-s\!\!
\end{array}\right]\!\otimes\!\left[\begin{array}{c}
\!a\! \\
\!b\!
\end{array}\right]+
\left[\begin{array}{c}
\!c\! \\
\!d\!
\end{array}\right]\!\otimes\!\left[\begin{array}{c}
\!\bar{\mu}_2\! \\
\!\!-t\!\!
\end{array}\right]\!;\\\\
5)\;z_1=\left[\begin{array}{c}
\!\mu_1\! \\
\!\!-s\!\!
\end{array}\right]\!\otimes\!\left[\begin{array}{c}
\!a\! \\
\!b\!
\end{array}\right]+
\left[\begin{array}{c}
\!c\! \\
\!d\!
\end{array}\right]\!\otimes\!\left[\begin{array}{c}
\!\mu_2\! \\
\!\!-t\!\!
\end{array}\right]\!,
\; z_4=h\left[\begin{array}{c}
\!\bar{\mu}_1\! \\
\!s\!
\end{array}\right]\!\otimes\!\left[\begin{array}{c}
\!\bar{\mu}_2\! \\
\!t\!
\end{array}\right]\!;
\end{array}
$$
where $\,\mu_k=\sqrt{\gamma_k},\,k=1,2,$ and $\,a,b,c,d,h \in
\mathbb{C}$, $s=\pm1$, $t=\pm1$.}\medskip

B) \emph{Validity of (\ref{b-eq-7}) and (\ref{b-eq-8}) for vectors
$\,x_i,y_i$, $i=1,4$, implies}
$$
\langle y_4|U_k\otimes V_l|y_1\rangle=\langle x_4|U_k\otimes
V_l|x_1\rangle=0.
$$
\end{lemma}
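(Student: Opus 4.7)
The plan is to reduce Part A to a linear-algebra computation on $2 \times 2$ matrices using the standard isomorphism $\mathbb{C}^2 \otimes \mathbb{C}^2 \cong \mathfrak{M}_2$. Writing $z_j \leftrightarrow Z_j$ with $Z_j \in \mathfrak{M}_2$, the four equations in (\ref{b-eq-7+}) translate into
$$
\Tr\bigl(Z_4^* U_k Z_1 V_l^{\top}\bigr) = 0, \qquad k,l=1,2,
$$
i.e.\ Frobenius orthogonality of $Z_4$ to each of the four matrices $U_k Z_1 V_l^{\top}$. Since $\mathrm{span}\{U_1, U_2\}$ and $\mathrm{span}\{V_1, V_2\}$ are each two-dimensional subspaces of $\mathfrak{M}_2$, this is equivalent to $\Tr(Z_4^* M Z_1 N^{\top}) = 0$ for all $M$ and $N$ in the respective spans.

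I would then case-split by the Schmidt ranks of $z_1$ and $z_4$. In the base case when both have Schmidt rank one, writing $z_1 = u \otimes v$ and $z_4 = u' \otimes v'$, the condition factorizes as the vanishing of a rank-one $2 \times 2$ matrix with entries $\langle u'|U_k|u\rangle \langle v'|V_l|v\rangle$, which forces an entire row or an entire column to be zero. Solving $\langle u'|U_k|u\rangle = 0$ for $k=1,2$ yields the two scalar equations
$$
\bar{u}'_1 u_1 + \gamma_1 \bar{u}'_2 u_2 = 0, \qquad \bar{u}'_1 u_2 + \bar{u}'_2 u_1 = 0,
$$
whence $u_1^2 = \gamma_1 u_2^2$, so $u \parallel (\mu_1, s)^{\top}$ and $u' \parallel (\bar{\mu}_1, -s)^{\top}$ for $s = \pm 1$, which is exactly form~1; the column case gives form~2 by the obvious symmetry. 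For higher Schmidt rank, the natural basis becomes the four distinguished vectors $(\mu_k, \pm 1)^{\top}$, $k=1,2$, which serve as substitute eigenbases for the bilinear forms $\langle\cdot|U_k|\cdot\rangle$ and $\langle\cdot|V_l|\cdot\rangle$. Enumerating when a Schmidt-rank-two $z_1$ can be paired with a Schmidt-rank-one or Schmidt-rank-two $z_4$ so that the matrix of bilinear forms still vanishes on $\mathrm{span}\{U_k\} \otimes \mathrm{span}\{V_l\}$ produces forms~3, 4, and~5. The main obstacle here is the combinatorial bookkeeping required to confirm that these five families are exhaustive and non-redundant.

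For Part B, the key is a linearity trick. Set $P_{kl} = \langle y_4 | U_k \otimes V_l | y_1\rangle = \langle x_4 | U_k \otimes V_l | x_1\rangle$, the equality being precisely (\ref{b-eq-8}). Expanding
$$
\langle x_4 \mp \mathrm{i} y_4 | U_k \otimes V_l | x_1 \pm \mathrm{i} y_1 \rangle = \langle x_4|\cdot|x_1\rangle - \langle y_4|\cdot|y_1\rangle \pm \mathrm{i}\bigl(\langle x_4|\cdot|y_1\rangle + \langle y_4|\cdot|x_1\rangle\bigr) = 0
$$
by (\ref{b-eq-7}) and (\ref{b-eq-8}), one sees that, in addition to the pairs $(x_1, y_4)$ and $(y_1, x_4)$ supplied directly by (\ref{b-eq-7}), the pairs $(x_1 \pm \mathrm{i} y_1, x_4 \mp \mathrm{i} y_4)$ also satisfy condition (\ref{b-eq-7+}). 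Applying Part~A to each of these four pairs yields four occurrences of the five-fold normal form; cross-matching the rigid parameters $\mu_1,\mu_2$ and the signs $s,t$ across the four occurrences, I expect the only mutually consistent configuration to force $P_{kl}=0$ for all $k,l$, which is the desired conclusion. The expected difficulty is to rule out the mixed Schmidt-rank cases (forms~4 and~5), where the free parameters $a,b,c,d$ leave the most room for nontrivial $P_{kl}$.
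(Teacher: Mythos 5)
Your setup for Part A is sound and your rank-one computation is correct: it reproduces forms 1 and 2, and you have correctly identified the distinguished vectors $[\mu_k,\pm1]^{\top}$. But the proof stops exactly where the real content begins. The classification into forms 3--5 and, above all, the claim that the five families are \emph{exhaustive} are left as "combinatorial bookkeeping" that you do not carry out, and the Schmidt-rank case split is an awkward organizing principle for it. The missing idea --- which is the paper's entire proof --- is to pass to the basis of the four product vectors $q_{\sigma\tau}=[\mu_1,\sigma]^{\top}\otimes[\mu_2,\tau]^{\top}$, $\sigma,\tau=\pm1$, for $z_1$ (and the conjugate basis for $z_4$). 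Because $\langle[\bar\mu_1,-\sigma]|U_k|[\mu_1,\sigma]\rangle=0$ for $k=1,2$ (your own rank-one computation), the $2\times2$ matrix $\bigl(\langle z_4|U_k\otimes V_l|z_1\rangle\bigr)_{k,l}$ decomposes as $\sum_{j=1}^4 p_j u_j\Lambda_j$ with the four matrices $\Lambda_j$ linearly independent, where $u=S^{-1}|z_1\rangle$ and the $p_j$ are explicit linear forms in $\langle z_4|$. Condition (\ref{b-eq-7+}) is therefore \emph{equivalent} to the four scalar equations $p_ju_j=0$, and the five forms are nothing but the enumeration of which of $p_1,\dots,p_4$ vanish ($\binom{4}{2}=6$ choices give forms 1--3, $\binom{4}{1}=4$ give form 4, $\binom{4}{3}=4$ give form 5). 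Exhaustiveness is then automatic; your approach has no comparably clean route to it.

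For Part B the gap is more serious. Your identity $\langle x_4\mp\mathrm{i}y_4|U_k\otimes V_l|x_1\pm\mathrm{i}y_1\rangle=0$ is correct, but it is just a linear repackaging of (\ref{b-eq-7}) and (\ref{b-eq-8}) and carries no new information; the proposed "cross-matching of rigid parameters" across four instances of the five-fold normal form is not an argument --- you say only that you \emph{expect} it to force $P_{kl}=0$, and with the free parameters $a,b,c,d,h$ floating in forms 4 and 5 it is far from clear that it does. In the diagonalized coordinates above, Part B is a three-line argument: (\ref{b-eq-7}) and (\ref{b-eq-8}) become $p^x_jv_j=p^y_ju_j=0$ and $p^x_ju_j=p^y_jv_j=\tilde c_j$ for each $j$; if $p^y_j\neq0$ then $u_j=0$ and hence $\tilde c_j=p^x_ju_j\cdot$(nothing)$\,=0$ is forced via the second pair of relations, while if $p^y_j=0$ then $\tilde c_j=p^y_jv_j=0$ directly; either way $\tilde c_j=0$ for all $j$, which is the assertion. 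Without the simultaneous diagonalization, neither half of the lemma is actually proved in your proposal.
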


\begin{lemma}\label{sys-l-2}
A) \emph{The  condition
\begin{equation}\label{b-eq-9+}
\langle z_3|U_k\otimes V^*_l|z_2\rangle=0\qquad\;\;\, k,l=1,2,\quad
\end{equation}
holds if and only if the pair $(z_2,z_3)$ has one of the following
forms:
$$
\!\!\!\begin{array}{l} 1)\; z_2=\left[\begin{array}{c}
\!\mu_1\! \\
\!s\!
\end{array}\right]\!\otimes\!\left[\begin{array}{c}
\!a\! \\
\!b\!
\end{array}\right]\!, \; z_3=\left[\begin{array}{c}
\!\bar{\mu}_1\! \\
\!\!-s\!\!
\end{array}\right]\!\otimes\!\left[\begin{array}{c}
\!c\! \\
\!d\!
\end{array}\right]\!;\\\\
2)\;z_2=\left[\begin{array}{c}
\!a\! \\
\!b\!
\end{array}\right]\!\otimes\!\left[\begin{array}{c}
\!\bar{\mu}_2\! \\
\!s\!
\end{array}\right]\!, \; z_3=\left[\begin{array}{c}
\!c\! \\
\!d\!
\end{array}\right]\!\otimes\!\left[\begin{array}{c}
\!\mu_2\! \\
\!\!-s\!\!
\end{array}\right]\!;\\\\
3)\;z_2=a\left[\begin{array}{c}
\!\mu_1\! \\
\!1\!
\end{array}\right]\!\otimes\!\left[\begin{array}{c}
\!\bar{\mu}_2\! \\
\!s\!
\end{array}\right]+
b\left[\begin{array}{c}
\!\mu_1\! \\
\!\!-1\!\!
\end{array}\right]\!\otimes\!\left[\begin{array}{c}
\!\bar{\mu}_2\! \\
\!\!-s\!\!
\end{array}\right]\!,
\; z_3=c\left[\begin{array}{c}
\!\bar{\mu}_1\! \\
\!\!-1\!\!
\end{array}\right]\!\otimes\!\left[\begin{array}{c}
\!\mu_2\! \\
\!s\!
\end{array}\right]+
d\left[\begin{array}{c}
\!\bar{\mu}_1\! \\
\!1\!
\end{array}\right]\!\otimes\!\left[\begin{array}{c}
\!\mu_2\! \\
\!\!-s\!\!
\end{array}\right]\!;\\\\
4)\;z_2=h\left[\begin{array}{c}
\!\mu_1\! \\
\!s\!
\end{array}\right]\!\otimes\!\left[\begin{array}{c}
\!\bar{\mu}_2\! \\
\!t\!
\end{array}\right]\!,
\; z_3=\left[\begin{array}{c}
\!\bar{\mu}_1\! \\
\!\!-s\!\!
\end{array}\right]\!\otimes\!\left[\begin{array}{c}
\!a\! \\
\!b\!
\end{array}\right]+
\left[\begin{array}{c}
\!c\! \\
\!d\!
\end{array}\right]\!\otimes\!\left[\begin{array}{c}
\!\mu_2\! \\
\!\!-t\!\!
\end{array}\right]\!;\\\\
5)\;z_2=\left[\begin{array}{c}
\!\mu_1\! \\
\!\!-s\!\!
\end{array}\right]\!\otimes\!\left[\begin{array}{c}
\!a\! \\
\!b\!
\end{array}\right]+
\left[\begin{array}{c}
\!c\! \\
\!d\!
\end{array}\right]\!\otimes\!\left[\begin{array}{c}
\!\bar{\mu}_2\! \\
\!\!-t\!\!
\end{array}\right]\!,
\; z_3=h\left[\begin{array}{c}
\!\bar{\mu}_1\! \\
\!s\!
\end{array}\right]\!\otimes\!\left[\begin{array}{c}
\!\mu_2\! \\
\!t\!
\end{array}\right]\!;
\end{array}
$$
where $\,\mu_k=\sqrt{\gamma_k},\,k=1,2,$ and $\,a,b,c,d,h \in
\mathbb{C}$, $s=\pm1$, $t=\pm1$.}\medskip

B) \emph{Validity of (\ref{b-eq-9}) and (\ref{b-eq-10}) for vectors
$\,x_i,y_i$, $i=2,3$, implies}
$$
\langle y_3|U_k\otimes V^*_l|y_2\rangle=\langle x_3|U_k\otimes
V^*_l|x_2\rangle=0.
$$
\end{lemma}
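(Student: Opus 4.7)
The plan is to deduce Lemma \ref{sys-l-2} from Lemma \ref{sys-l-1} via a structural correspondence. Observe that $V_2$ is Hermitian, so $V_2^{*}=V_2$, while $V_1^{*}=\mathrm{diag}(1,\bar{\gamma}_2)$. Hence the family $\{V_l^{*}\}_{l=1,2}$ is obtained from $\{V_l\}_{l=1,2}$ by replacing $\gamma_2$ with $\bar{\gamma}_2$, equivalently $\mu_2=\sqrt{\gamma_2}$ with $\bar{\mu}_2=\sqrt{\bar{\gamma}_2}$ (using the principal branch, so that $\overline{\sqrt{\gamma_2}}=\sqrt{\bar{\gamma}_2}$). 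Consequently, any statement about $\{U_k\otimes V_l\}_{k,l=1,2}$ transports to the analogous statement about $\{U_k\otimes V_l^{*}\}_{k,l=1,2}$ under this substitution, and vice versa.

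For part A, I would apply Lemma \ref{sys-l-1}A to the family $\{U_k\otimes V_l^{*}\}$, identifying the role of $(z_1,z_4)$ in Lemma \ref{sys-l-1} with $(z_2,z_3)$ in Lemma \ref{sys-l-2} (note that both $z_1$ and $z_2$ appear on the right of the inner product). Condition (\ref{b-eq-9+}) then coincides with (\ref{b-eq-7+}) under this identification, and the five parametric families of Lemma \ref{sys-l-1}A become those of Lemma \ref{sys-l-2}A after substituting $\mu_2\leftrightarrow\bar{\mu}_2$ in the second tensor factor. I would verify this correspondence case by case, comparing each of the five forms side by side (possibly relabeling the free scalar parameters $a,b,c,d,h$ in cases where the substitution permutes summands).

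For part B, equations (\ref{b-eq-9}) and (\ref{b-eq-10}) are the analogs of (\ref{b-eq-7}) and (\ref{b-eq-8}) after replacing $V_l$ by $V_l^{*}$ and relabeling the indices $(1,4)\mapsto(2,3)$. Since the proof of Lemma \ref{sys-l-1}B uses only (\ref{b-eq-7}), (\ref{b-eq-8}) and the classification in its part A---none of which distinguishes between $V_l$ and $V_l^{*}$---it carries over verbatim under the substitution. Equivalently, I apply Lemma \ref{sys-l-1}B to the family $\{U_k\otimes V_l^{*}\}$ to conclude $\langle y_3|U_k\otimes V_l^{*}|y_2\rangle=\langle x_3|U_k\otimes V_l^{*}|x_2\rangle=0$, which is precisely the desired statement.

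The main obstacle is organizational rather than computational: one must verify that the five forms in Lemma \ref{sys-l-2}A are indeed obtained from those in Lemma \ref{sys-l-1}A by the stated substitution (with relabelings). Once the substitution scheme is made precise and the principal branch convention is fixed so that $\overline{\sqrt{\gamma_2}}=\sqrt{\bar{\gamma}_2}$, both parts follow without further work. In particular, no new ingredient beyond Lemma \ref{sys-l-1} is needed, which accounts for the close parallelism between the statements of the two lemmas.
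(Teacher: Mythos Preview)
Your proposal is correct and matches the paper's own argument exactly: the paper's entire proof of Lemma~\ref{sys-l-2} is the single sentence ``Lemma~\ref{sys-l-2} follows from Lemma~\ref{sys-l-1} with $\gamma_2$ replaced by $\bar{\gamma}_2$.'' Your write-up is simply a more detailed unpacking of this reduction, including the explicit observation that $V_2^{*}=V_2$ and $V_1^{*}=\mathrm{diag}(1,\bar{\gamma}_2)$, so nothing further is needed.
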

\medskip

Lemmas \ref{sys-l-1} and \ref{sys-l-2} are proved in the Appendix.
\medskip

\begin{lemma}\label{sl}  \emph{Let $\;|\theta_1|+|\theta_2|<\pi\,$
then}
\begin{enumerate}[A)]
    \item \emph{$\langle x|U_1 |x\rangle\neq 0\,$ and $\,\langle x|V_1 |x\rangle\neq 0\,$ for any
nonzero $x\in\mathbb{C}^2$;}
    \item \emph{$ \langle y|U_1\otimes V_1| y\rangle\neq 0\,$ and $\,\langle y|U_1\otimes V^*_1| y\rangle\neq 0\,$ for any nonzero  $\,y\in
\mathbb{C}^2\otimes\mathbb{C}^2$.}
\end{enumerate}
\end{lemma}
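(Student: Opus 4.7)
The plan is to reduce both assertions to the following elementary principle: a nontrivial nonnegative-real linear combination of complex numbers lying strictly inside an open half-plane of $\mathbb{C}$ is nonzero. The hypothesis $|\theta_1|+|\theta_2|<\pi$ will enter only to guarantee that the relevant collections of unit complex numbers fit into such a half-plane.

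For part A, I would write $x=(x_1,x_2)^{\top}$ in the canonical basis and read off directly
$$\langle x|U_1|x\rangle=|x_1|^2+\gamma_1|x_2|^2,\qquad \langle x|V_1|x\rangle=|x_1|^2+\gamma_2|x_2|^2.$$
Both coefficients are nonnegative and not simultaneously zero when $x\neq 0$. The bound $|\theta_1|+|\theta_2|<\pi$ gives $|\theta_k|<\pi$, so $\gamma_k=e^{\mathrm{i}\theta_k}\neq -1$; hence the positive rays $\mathbb{R}_{+}\cdot 1$ and $\mathbb{R}_{+}\cdot\gamma_k$ meet only at the origin, and neither sum can vanish.

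For part B, I would expand $y$ in the product canonical basis of $\mathbb{C}^2\otimes\mathbb{C}^2$ and use $U_1\otimes V_1=\mathrm{diag}(1,\gamma_2,\gamma_1,\gamma_1\gamma_2)$ to get
$$\langle y|U_1\otimes V_1|y\rangle=\sum_{k=1}^{4}|y_k|^{2}e^{\mathrm{i}\alpha_k},\qquad \{\alpha_k\}=\{0,\theta_2,\theta_1,\theta_1+\theta_2\}.$$
The set of angles here is the Minkowski sum $\{0,\theta_1\}+\{0,\theta_2\}$, which a short sign-case check (both signs nonnegative, both nonpositive, mixed) shows is always contained in a closed interval of length exactly $|\theta_1|+|\theta_2|$. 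Since this length is strictly less than $\pi$, all four unit vectors $e^{\mathrm{i}\alpha_k}$ lie strictly inside some open half-plane of $\mathbb{C}$; as $y\neq 0$ at least one coefficient $|y_k|^2$ is positive, so the elementary principle forces the sum to be nonzero. The same argument applies verbatim to $U_1\otimes V_1^*=\mathrm{diag}(1,\bar\gamma_2,\gamma_1,\gamma_1\bar\gamma_2)$, whose angles form the Minkowski sum $\{0,\theta_1\}+\{0,-\theta_2\}$, still contained in an interval of length $|\theta_1|+|{-\theta_2}|=|\theta_1|+|\theta_2|<\pi$.

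No substantive obstacle is expected. The only delicate point is to keep strict inequalities strict: the assumption $|\theta_1|+|\theta_2|<\pi$ (rather than $\leq\pi$) is precisely what distinguishes the open half-plane from the boundary case, in which two of the $e^{\mathrm{i}\alpha_k}$ could become antipodal and the sum could collapse.
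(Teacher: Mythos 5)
Your proof is correct and follows essentially the same route as the paper: part A by direct computation using $\gamma_k\neq-1$, and part B by observing that the diagonal entries $1,\gamma_2,\gamma_1,\gamma_1\gamma_2$ of $U_1\otimes V_1$ (and likewise for $U_1\otimes V^*_1$) lie strictly inside an open half-plane when $|\theta_1|+|\theta_2|<\pi$, which is exactly the paper's remark that $0,1,\gamma_2,\gamma_1,\gamma_1\gamma_2$ are extreme points of a convex polygon. Your explicit Minkowski-sum and interval-length check simply spells out the half-plane claim that the paper asserts in one line.
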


\begin{proof}

A) Let $\,|x\rangle=[x_1,x_2]^{\top}\neq0\;$ then $\langle
x|U_1|x\rangle=|x_1|^2+\gamma_1|x_2|^2\neq0$ and $\langle
x|V_1|x\rangle=|x_1|^2+\gamma_2|x_2|^2\neq0$, since
$\theta_1,\theta_2\neq\pi$.
\smallskip

B) Since $U_1\otimes V_1=\mathrm{diag}\{1, \gamma_2, \gamma_1,
\gamma_1\gamma_2\}$, the equality $\langle y|U_1\otimes V_1|
y\rangle=0$ for vector $|y\rangle=[y_1,y_2,y_3,y_4]^{\top}$ means
that
$|y_1|^2+|y_2|^2\gamma_2+|y_3|^2\gamma_1+|y_4|^2\gamma_1\gamma_2=0$.
By the condition $\;|\theta_1|+|\theta_2|<\pi\,$ the numbers $0,1,
\gamma_2, \gamma_1, \gamma_1\gamma_2$ are extreme points of a convex
polygon in complex plane, so the last equality can be valid only if
$\,y_i=0$ for all $i$.\smallskip

Similarly one can show that $\langle y|U_1\otimes V^*_1| y\rangle=0$
implies $\,y=0$.
\end{proof}

\begin{lemma}\label{rl}
\emph{Let $\,p$ and  $\,q\,$ be complex numbers such that
$\;|p|^2+|q|^2=1$. If $\{|x_i\rangle\}_{i=1}^4$ and
$\{|y_i\rangle\}_{i=1}^4$ satisfy the system
(\ref{b-eq-2})-(\ref{b-eq-10}) then $\{|p x_i-q
y_i\rangle\}_{i=1}^4$ and $\{|\bar{q} x_i+\bar{p}
y_i\rangle\}_{i=1}^4$ also satisfy the system
(\ref{b-eq-2})-(\ref{b-eq-10}).}
\end{lemma}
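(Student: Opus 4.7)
My plan is to exploit a hidden $U(2)$-symmetry of the system (\ref{b-eq-2})--(\ref{b-eq-10}). The prescribed transformation is the action of the $2{\times}2$ matrix $W=\left(\begin{smallmatrix} p & -q \\ \bar q & \bar p\end{smallmatrix}\right)$ on each pair $(x_i,y_i)$, and $W$ is unitary precisely because $|p|^2+|q|^2=1$. So the claim is that the whole system is invariant under this simultaneous pointwise unitary action.

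To make the symmetry manifest I would reorganize the system into ``blocks.'' A block is parametrized by an operator $T$ (acting on $\mathbb{C}^4$ or on $\mathbb{C}^2\otimes\mathbb{C}^2$) together with a finite family of ordered index pairs $\{(i_\alpha,j_\alpha)\}$, and it consists of the three relations
$$
\sum_\alpha \langle y_{i_\alpha}|T|x_{j_\alpha}\rangle=0,\qquad \sum_\alpha \langle x_{i_\alpha}|T|y_{j_\alpha}\rangle=0,\qquad \sum_\alpha \langle x_{i_\alpha}|T|x_{j_\alpha}\rangle=\sum_\alpha \langle y_{i_\alpha}|T|y_{j_\alpha}\rangle.
$$
A routine inspection covers every equation of the system by such a block: (\ref{b-eq-2}) (whose adjoint is automatic) together with (\ref{b-eq-1}) give the blocks with pairs $(i,i)$, $i=1,\ldots,4$, and $T\in\mathfrak{M}_4$ arbitrary; (\ref{b-eq-3}), (\ref{b-eq-3+}), (\ref{b-eq-4}) give the blocks with $T=A\otimes V_k$ and pairs $(2,1),(4,3)$; (\ref{b-eq-5}), (\ref{b-eq-5+}), (\ref{b-eq-6}) give the blocks with $T=U_k\otimes A$ and pairs $(3,1),(4,2)$; and (\ref{b-eq-7})--(\ref{b-eq-8}), respectively (\ref{b-eq-9})--(\ref{b-eq-10}), give the blocks with single pair $(4,1)$, resp.\ $(3,2)$, and $T=U_k\otimes V_l$, resp.\ $T=U_k\otimes V_l^*$.

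Each block is equivalent to the single assertion that the $2\times 2$ matrix
$$
F(T)=\sum_\alpha \begin{pmatrix} \langle x_{i_\alpha}|T|x_{j_\alpha}\rangle & \langle x_{i_\alpha}|T|y_{j_\alpha}\rangle \\ \langle y_{i_\alpha}|T|x_{j_\alpha}\rangle & \langle y_{i_\alpha}|T|y_{j_\alpha}\rangle \end{pmatrix}
$$
is a scalar multiple of $I_2$. A short direct expansion of the primed inner products, together with $|p|^2+|q|^2=1$, yields the transformation rule $F'(T)=U^*F(T)U$ with $U=W^{\top}=\left(\begin{smallmatrix}p & \bar q\\ -q & \bar p\end{smallmatrix}\right)$ unitary. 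Since scalar multiples of the identity are invariant under unitary conjugation, $F'(T)$ is again such a multiple, and therefore the primed vectors satisfy the same block. Running this over all blocks yields the lemma.

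The only real work is the organizational step above---verifying that every one of the relations (\ref{b-eq-2})--(\ref{b-eq-10}) belongs to exactly one of the block-types listed. The identity $F'=U^*F\,U$ is a one-line computation, and the unitary-invariance step is immediate. Conceptually the result is not surprising: the system was derived from condition (\ref{b-eq}), which is manifestly invariant under a unitary change of basis in the $2$-D span of $\varphi,\psi$; the block formulation simply exposes this invariance at the level of the derived equations.
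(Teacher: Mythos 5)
Your proof is correct and is essentially the paper's argument: the paper's one-line proof notes that the condition $\langle\varphi|A|\psi\rangle=\langle\psi|A|\varphi\rangle=\langle\psi|A|\psi\rangle-\langle\varphi|A|\varphi\rangle=0$ is invariant under the rotation $|\varphi\rangle\mapsto p|\varphi\rangle-q|\psi\rangle$, $|\psi\rangle\mapsto\bar{q}|\varphi\rangle+\bar{p}|\psi\rangle$, which is exactly your observation that the matrix $F(T)$ being scalar is preserved under unitary conjugation. Your block-by-block bookkeeping simply makes explicit what the paper leaves implicit, namely that every equation of (\ref{b-eq-2})--(\ref{b-eq-10}) is an instance of that invariant condition.
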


\begin{proof} It suffices to note that the condition
\begin{equation*}
\langle\varphi|A|\psi\rangle=\langle\psi|A|\varphi\rangle=\langle\psi|A
|\psi\rangle-\langle\varphi|A|\varphi\rangle=0
\end{equation*} is
invariant under the rotation $|\varphi\rangle\mapsto
p|\varphi\rangle-q |\psi\rangle,
\;|\psi\rangle\mapsto\bar{q}|\varphi\rangle+\bar{p}|\psi\rangle$.
\end{proof}

\begin{lemma}\label{pl}
\emph{If $\;|\theta_1|+|\theta_2|<\pi\,$ then the system
(\ref{b-eq-2})-(\ref{b-eq-10}) has no nontrivial solution of the
form  $\,|x_i\rangle=\alpha_i|z\rangle$ and
$\,|y_i\rangle=\beta_i|z\rangle$, $i=\overline{1,4}$.}
\end{lemma}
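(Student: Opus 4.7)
The plan is to substitute the rank-one ansatz $|x_i\rangle=\alpha_i|z\rangle$, $|y_i\rangle=\beta_i|z\rangle$ (with $z\neq 0$ fixed) into each of the equations (\ref{b-eq-2})--(\ref{b-eq-10}), factor out the $z$-dependence, and show that the resulting scalar system on $\alpha=(\alpha_i)$ and $\beta=(\beta_i)$ in $\mathbb{C}^4$ --- together with the orthogonality $\sum_i\bar\alpha_i\beta_i=0$ and normalisations $\sum_i|\alpha_i|^2=\sum_i|\beta_i|^2=1$ coming from $\varphi\perp\psi$ and $\|\varphi\|=\|\psi\|=1$ --- forces $\alpha=0$ or $\beta=0$. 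Every bilinear expression $\langle y_i|X|x_j\rangle$ factors as $\bar\beta_i\alpha_j\langle z|X|z\rangle$, so the preliminary task is to verify that for every operator $X$ occurring in the system a suitable specialisation of its free parameter makes $\langle z|X|z\rangle\neq 0$: for $X=A\otimes V_k$ or $X=U_k\otimes A$ it suffices to take $A=I_2$ and use that $\gamma_k\neq -1$ (from $|\theta_k|<\pi$), while for $X=U_k\otimes V_l$ and $X=U_k\otimes V_l^*$ this is exactly Lemma \ref{sl}.B.

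Next, equations (\ref{b-eq-7}), (\ref{b-eq-8}) combined with Lemma \ref{sys-l-1}.B, and symmetrically (\ref{b-eq-9}), (\ref{b-eq-10}) with Lemma \ref{sys-l-2}.B, will deliver the eight vanishings
\begin{equation*}
\bar\alpha_4\alpha_1=\bar\beta_4\beta_1=\bar\alpha_4\beta_1=\bar\beta_4\alpha_1=0,\quad\bar\alpha_3\alpha_2=\bar\beta_3\beta_2=\bar\alpha_3\beta_2=\bar\beta_3\alpha_2=0,
\end{equation*}
which express the vanishing of the outer products of $(\alpha_1,\beta_1)$ with $\overline{(\alpha_4,\beta_4)}$ and of $(\alpha_2,\beta_2)$ with $\overline{(\alpha_3,\beta_3)}$. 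Hence at least one pair from each of $\{1,4\}$ and $\{2,3\}$ must be zero, leaving exactly four combinatorial cases for the joint support of $\alpha$ and $\beta$: $\{3,4\}$, $\{2,4\}$, $\{1,3\}$, $\{1,2\}$.

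In each of these cases only a pair of indices $\{p,q\}$ survives and the scalar forms of (\ref{b-eq-3})--(\ref{b-eq-6}) collapse to the short system $\bar\beta_q\alpha_p=0$, $\bar\alpha_q\beta_p=0$, $\bar\beta_q\beta_p=\bar\alpha_q\alpha_p$, supplemented by $\bar\alpha_p\beta_p+\bar\alpha_q\beta_q=0$ and $|\alpha_p|^2+|\alpha_q|^2=|\beta_p|^2+|\beta_q|^2=1$. An elementary sub-case analysis based on which of $\alpha_p,\alpha_q,\beta_p,\beta_q$ vanish (as forced by the first two equations) will show that each of $\alpha,\beta$ must be supported on a single coordinate; the orthogonality then kills one of them and contradicts unit norm.

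The main obstacle I anticipate is bookkeeping rather than anything algebraically deep: once the rank-one reduction has been carried out, the surviving system is essentially trivial, but one has to run all four support-cases and the sub-cases within each, making sure that the nonvanishing of the quadratic forms $\langle z|X|z\rangle$ --- which is where the hypothesis $|\theta_1|+|\theta_2|<\pi$ actually enters --- is invoked on the correct $X$ at each step.
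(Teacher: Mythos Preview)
Your argument is correct but follows a route different from the paper's. The paper extracts only the four ``cross'' vanishings $\alpha_1\bar\beta_4=\alpha_4\bar\beta_1=\alpha_2\bar\beta_3=\alpha_3\bar\beta_2=0$ directly from (\ref{b-eq-7}), (\ref{b-eq-9}) and Lemma~\ref{sl}B, and then leans on the rotation invariance of Lemma~\ref{rl} to assume without loss of generality that particular $\alpha_i$ are nonzero, driving the case analysis on whether some index satisfies $\alpha_i=\beta_i=0$; in the subcase $\alpha_1=\beta_1=0$, $|\alpha_4|+|\beta_4|>0$ it also invokes (\ref{b-eq-3+}) and (\ref{b-eq-5+}). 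By contrast, you bring in Lemmas~\ref{sys-l-1}B and~\ref{sys-l-2}B at the outset to obtain the additional ``diagonal'' vanishings $\bar\alpha_4\alpha_1=\bar\beta_4\beta_1=\bar\alpha_3\alpha_2=\bar\beta_3\beta_2=0$, which immediately forces one of $(\alpha_1,\beta_1),(\alpha_4,\beta_4)$ and one of $(\alpha_2,\beta_2),(\alpha_3,\beta_3)$ to vanish identically; the remaining two-index problem is then dispatched from the scalar reductions of (\ref{b-eq-3})--(\ref{b-eq-6}) together with orthogonality and normalisation, without ever using Lemma~\ref{rl}. Your approach is more self-contained and mechanical; the paper's is shorter once the rotation trick is on the table. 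One small point worth making explicit in your write-up: the choice $A=I_2$ works only with $k=1$ (for $V_2$ and $U_2$ the expectation $\langle z|I_2\otimes V_2|z\rangle$ can vanish), but that single witness is all you need since (\ref{b-eq-3})--(\ref{b-eq-6}) hold for \emph{all} $A$ and $k$.
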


\begin{proof} Assume that $\,|x_i\rangle=\alpha_i|z\rangle$ and
$\,|y_i\rangle=\beta_i|z\rangle$, $i=\overline{1,4},$ form a
nontrivial solution of the system (\ref{b-eq-2})-(\ref{b-eq-10}).
Then (\ref{b-eq-2}) implies that
$|\alpha\rangle=[\alpha_1,\ldots,\alpha_4]^{\top}$ and
$|\beta\rangle=[\beta_1,\ldots,\beta_4]^{\top}$ are orthogonal
nonzero vectors of the same norm. By Lemma \ref{sl}B it follows from
(\ref{b-eq-7}) and (\ref{b-eq-9}) that
\begin{equation}\label{vnr}
\alpha_1\beta_4=\alpha_4\beta_1=\alpha_2\beta_3=\alpha_3\beta_2=0.
\end{equation}
If $|\alpha_i|+|\beta_i|>0$ for all $i$ then we may consider by
Lemma \ref{rl} that $\alpha_i\neq0$ for all $i$ and hence
(\ref{vnr}) implies $\beta_i=0$ for all $i$, i.e. $|\beta\rangle=0$.

Assume that $\alpha_1=\beta_1=0\,$ and consider the following two
cases.

1) If $\alpha_4=\beta_4=0$ then the condition
$\langle\beta|\alpha\rangle=0$ implies $|\alpha_i|+|\beta_i|>0$ for
$i=2,3$ and we may consider by Lemma \ref{rl} that $\alpha_i\neq0$
for $i=2,3$. Hence (\ref{vnr}) implies $\beta_i=0$ for $i=2,3$ and
hence $|\beta\rangle=0$.

2) If $|\alpha_4|+|\beta_4|>0$ then we may consider by Lemma
\ref{rl} that $\alpha_4\neq0$. By Lemma \ref{sl}B it follows from
(\ref{b-eq-3+}) with $A=U_1$ and (\ref{b-eq-5+}) with $A=V_1$ that
$\alpha_4\beta_2=\alpha_4\beta_3=0$, which implies
$\beta_2=\beta_3=0$. Hence the condition
$\langle\beta|\alpha\rangle=0$ can be valid only if
$|\beta\rangle=0$.

By the similar way one can show that the assumption
$\alpha_i=\beta_i=0$ for $i=2,3,4\,$ leads to a contradiction.
\end{proof}

Assume now that the collections $\{x_i\}_1^4$ and $\{y_i\}_1^4$ form
a nontrivial solution of the system (\ref{b-eq-2})-(\ref{b-eq-10}).

If $\,x_i\nparallel y_i\,$ for some $\,i\,$  then
(\ref{b-eq-7})-(\ref{b-eq-10}) and Lemmas \ref{sys-l-1}B,
\ref{sys-l-2}B imply
$$
\langle y_{5-i}|W_i| x_i\rangle=\langle x_{5-i}|W_i|
y_i\rangle=\langle x_{5-i}|W_i| x_i\rangle=\langle y_{5-i}|W_i|
y_i\rangle=0,
$$
where $W_1=U_1\otimes V_1$, $W_2=U_1\otimes V^*_1$,
$W_3=U^*_1\otimes V_1$, $W_4=U^*_1\otimes V^*_1$. By Lemma \ref{dl}A
we have $x_{5-i},y_{5-i}\in\mathrm{lin}\{x_i,y_i\}$, so the above
equalities show that $\langle x_{5-i}|W_i| x_{5-i}\rangle=\langle
y_{5-i}|W_i| y_{5-i}\rangle=0$. Lemma \ref{sl}B implies
$x_{5-i}=y_{5-i}=0$, which contradicts to the assumption
$x_i\nparallel y_i$ by Lemma \ref{dl}B. \smallskip

Thus, $\,x_i\parallel y_i\,$ for all $\,i=\overline{1,4}$. By Lemma
\ref{pl} we may assume in what follows that
\begin{equation}\label{assum}
|x_i\rangle=\alpha_i|z_i\rangle \;\,\textup{and}\;\,
|y_i\rangle=\beta_i|z_i\rangle,\; \textup{where} \;\,|z_i\rangle\,
\;\textup{are non-collinear vectors}\footnote{In the sense that all
the vectors $\,|z_i\rangle$, $i=\overline{1,4},\,$ are not collinear
to each other, but some two of them  may be collinear.}.
\end{equation}

Then Lemma \ref{dl}B implies
$|\alpha_i|+|\beta_i|>0,\;i=\overline{1,4},\,$ and equations
(\ref{b-eq-2}), (\ref{b-eq-1}) can be rewritten as follows
\begin{equation}\label{eq-2-sf}
\sum_{i=1}^4 \bar{\beta}_i \alpha_i|z_i\rangle\langle z_i|=0,
\end{equation}
\begin{equation}\label{eq-1-sf}
\sum_{i=1}^4\left[|\beta_i|^2-|\alpha_i|^2\right]|z_i\rangle\langle
z_i|=0.
\end{equation}

By Lemma \ref{rl} we may assume that $\beta_1=0$ and hence
$\alpha_1\neq0$. There are two cases:

1) If $\,\beta_i\alpha_i\neq0\,$ for all $\,i>1\,$ then
(\ref{eq-2-sf}) and Lemma \ref{al-1} in the Appendix imply
$z_2\parallel z_3\parallel z_4$. Then it follows from
(\ref{eq-1-sf}) that
$$
|\alpha_1|^2|z_1\rangle\langle z_1|+[\ldots]|z_2\rangle\langle
z_2|=0
$$
and hence $z_1\parallel z_2\parallel z_3\parallel z_4$ contradicting
to the assumption (\ref{assum}).

2) If there is $\,k>1\,$ such that $\,\beta_k\alpha_k=0\,$ then
(\ref{eq-2-sf}) implies that either $\,\beta_i\alpha_i\neq0\,$ and
$\,\beta_j\alpha_j\neq0\,$ or $\beta_i\alpha_i=\beta_j\alpha_j=0$,
where $i$ and $j>i$ are complementary indexes to $1$ and $k$.

If $\,\beta_i\alpha_i\neq0\,$ and $\,\beta_j\alpha_j\neq0\,$ then it
follows from (\ref{eq-2-sf}) that $z_i\parallel z_j$ and
(\ref{eq-1-sf}) implies
$$
|\alpha_1|^2|z_1\rangle\langle z_1|+p|z_k\rangle\langle
z_k|+[\ldots]|z_i\rangle\langle z_i|=0,
$$
where $p$ is a nonzero number (equal either to $|\alpha_k|^2$ or to
$-|\beta_k|^2$). Hence $z_1\parallel z_k$ by Lemma \ref{al-1} in the
Appendix.

Thus $z_1\parallel z_k$ and $z_i\parallel z_j$. By Lemma \ref{sl}B
it follows from (\ref{b-eq-7}) and (\ref{b-eq-9}) that $k\neq4$ and
$(i,j)\neq(2,3)$. So, we have only two possibilities:

a) $k=2, i=3, j=4$. In this case $z_3\parallel z_4$ and
(\ref{b-eq-3+})  with $A=U_1$ implies
$$
\bar{\alpha}_4\beta_3\langle z_4|U_1\otimes
V_1|z_3\rangle=-\bar{\alpha}_2\beta_1\langle z_2|U_1\otimes
V_1|z_1\rangle=0\quad (\textrm{since}\;\beta_1=0).
$$
Hence Lemma \ref{sl} shows that $\,\alpha_4\beta_3=0\,$
contradicting to the assumption.

b) $k=3, i=2, j=4$. In this case $z_2\parallel z_4$ and
(\ref{b-eq-5+}) with $A=V_1$ implies
$$
\bar{\alpha}_4\beta_2\langle z_4|U_1\otimes
V_1|z_2\rangle=-\bar{\alpha}_3\beta_1\langle z_3|U_1\otimes
V_1|z_1\rangle=0\quad (\textrm{since}\;\beta_1=0).
$$
Hence Lemma \ref{sl}B shows that $\,\alpha_4\beta_2=0\,$
contradicting to the assumption.\smallskip

So, we necessarily have $\,\beta_i\alpha_i=0\,$ for all
$\,i=\overline{1,4}$. Since the vectors $z_1,\ldots,z_4\,$ are not
collinear by assumption (\ref{assum}), equality (\ref{eq-1-sf}) and
Lemma \ref{dl}B imply that there are two nonzero $\alpha_i$ and two
nonzero $\beta_i$. Thus, we have (up to permutation) the following
cases
$$
\mathrm{a)}\,
|\varphi\rangle,\!|\psi\rangle\!=\!\!\left[\begin{array}{c}
\!x_1\! \\
\!x_2\!\\
0\\
0
\end{array}\right]\!\!,\!
\left[\begin{array}{c}
0 \\
0\\
\!y_3\!\\
\!y_4\!
\end{array}\right]\!\!;\;\,
\mathrm{b)}\,
|\varphi\rangle,\!|\psi\rangle\!=\!\!\left[\begin{array}{c}
\!x_1\! \\
0\\
\!x_3\!\\
0
\end{array}\right]\!\!,\!
\left[\begin{array}{c}
0 \\
\!y_2\!\\
0\\
\!y_4\!
\end{array}\right]\!\!;\;\,
\mathrm{c)}\,
|\varphi\rangle,\!|\psi\rangle\!=\!\!\left[\begin{array}{c}
\!x_1\! \\
0\\
0\\
\!x_4\!
\end{array}\right]\!\!,\!
\left[\begin{array}{c}
0 \\
\!y_2\!\\
\!y_3\!\\
0
\end{array}\right]\!\!,
$$
where $x_1\nparallel x_k$ and $y_i\nparallel y_j$ (since otherwise
(\ref{eq-1-sf}) implies $x_1\parallel x_k\parallel y_i\parallel
y_j$).

Show first that case c) is not possible. It follows from
(\ref{b-eq-3}) with $A=U_1$ and (\ref{b-eq-5}) with $A=V_1$ that
\begin{equation*}
\langle y_2|U_1\otimes V_1|x_1\rangle=\langle y_3|U_1\otimes
V_1|x_1\rangle=0.
\end{equation*}
Since $y_2\nparallel y_3$, Lemma \ref{dl}A shows that $x_1\in
\mathrm{lin}\{y_2,y_3\}$  and the above equalities  imply $\langle
x_1|U_1\otimes V_1|x_1\rangle=0$. By Lemma \ref{sl}B $\,x_1=0$.
\medskip

It is more difficult to show impossibility of cases a) and b). We
will consider these cases simultaneously by denoting
$\,z_2=x_2,z_3=y_3\,$ in case a), $\,z_2=y_2,z_3=x_3\,$ in case b)
and $\,z_1=x_1,z_4=y_4\,$ in the both cases. The system
(\ref{b-eq-2})-(\ref{b-eq-10}) implies the following equations:
\begin{equation}\label{nbe-1}
|x_1\rangle\langle x_1|+|x_i\rangle\langle x_i|=|y_j\rangle\langle
y_j|+|y_4\rangle\langle y_4|
\end{equation}
where $(i,j)=(2,3)$ in case a) and $(i,j)=(3,2)$ in case b),
\begin{equation}\label{nbe-2}
\langle z_3|U_k\otimes A|x_1\rangle=-\sigma_{\!*}\langle
y_4|U_k\otimes A|z_2\rangle\quad \forall A\in\mathfrak{M}_2,\;k=1,2,
\end{equation}
\begin{equation}\label{nbe-3}
\langle z_2|A\otimes V_k|x_1\rangle=+\sigma_{\!*}\langle
y_4|A\otimes V_k|z_3\rangle\quad \forall A\in\mathfrak{M}_2,\;k=1,2,
\end{equation}
where $\sigma_{\!*}=1$ in case a) and $\sigma_{\!*}=-1$ in case b),
\begin{equation}\label{nbe-4}
\langle y_4|U_k\otimes V_l|x_1\rangle=0\quad  k,l=1,2,
\end{equation}
\begin{equation}\label{nbe-5}
\langle z_3|U_k\otimes V^*_l|z_2\rangle=0\quad k,l=1,2.
\end{equation}

It follows from (\ref{nbe-4}) and (\ref{nbe-5}) that the pairs
$(x_1,y_4)$ and $(z_2,z_3)$ must have one of the forms 1-5 presented
in part A of Lemmas \ref{sys-l-1} and \ref{sys-l-2}
correspondingly.\smallskip

Assume first that the both pairs $(x_1,y_4)$ and $(z_2,z_3)$ have
forms 1-2. In this case $\,x_1,z_2,z_3,y_4\,$ are tensor product
vectors. By Lemma \ref{al-2} in the Appendix (\ref{nbe-1}) can be
valid only in the following cases (1-4):

1) $|z_i\rangle=|p\rangle\otimes|a_i\rangle$, $i=\overline{1,4}$. It
follows from (\ref{nbe-2}) that
$$
\langle p|U_1|p\rangle\langle a_3|A|a_1\rangle=-\sigma_{\!*}\langle
p|U_1|p\rangle\langle a_4|A|a_2\rangle\quad \forall
A\in\mathfrak{M}_2.
$$
Since $\langle p|U_1|p\rangle\neq0$ by Lemma \ref{sl}A, we have
$\,a_1\parallel a_2\,$ and $\,a_3\parallel a_4\,$. In case a) this
and (\ref{nbe-1}) implies $\,x_1\parallel x_2\parallel\,y_3\parallel
y_4\,$ contradicting to
 assumption (\ref{assum}). In case b) it means  $\,x_1\parallel
y_2\,$ and $\,x_3\parallel y_4\,$. The assumption $\,x_1\nparallel
x_3\,$ and (\ref{nbe-1}) show that this can be valid only if
$\,|x_1\rangle\langle x_1|=|y_2\rangle\langle y_2|\,$ and
$\,|x_3\rangle\langle x_3|=|y_4\rangle\langle y_4|$. So, this case
is reduced to case 4) considered below.

2) $|z_i\rangle=|a_i\rangle\otimes|p\rangle$, $i=\overline{1,4}$.
Similarly to case 1) this case is reduced to case 4) by using
(\ref{nbe-3}) instead of (\ref{nbe-2}).

3) $|x_1\rangle\langle x_1|=|y_4\rangle\langle y_4|$ and
$|z_2\rangle\langle z_2|=|z_3\rangle\langle z_3|$. This is not
possible due to (\ref{nbe-4})-(\ref{nbe-5}) and Lemma \ref{sl}B.

4) $|x_1\rangle\langle x_1|=|y_i\rangle\langle y_i|$ and
$|x_{5-i}\rangle\langle x_{5-i}|=|y_4\rangle\langle y_4|$, where
$i=3$ in case a) and $i=2$ in case b).

If $\,i=3\,$ then $y_3=\alpha x_1$, $y_4=\beta x_2$,
$|\alpha|=|\beta|=1$, and (\ref{nbe-2}) with $\sigma_*=1$ implies
\begin{equation}\label{ins-eq}
\bar{\alpha}\langle x_1|U_1\otimes A|x_1\rangle=-\bar{\beta}\langle
x_2|U_1\otimes A|x_2\rangle\quad \forall A\in\mathfrak{M}_2.
\end{equation}
Since $x_1$ and $x_2$ are product vectors, it follows from
(\ref{ins-eq}) and Lemma \ref{sl}A that
$$
x_1=a\otimes p\quad \textrm{and}\quad x_2=b\otimes p
$$
for some nonzero vectors $\,a,b,p$. Hence (\ref{nbe-4}),
(\ref{nbe-5}) and Lemma \ref{sl}A imply
\begin{equation*}
\langle b|U_k|a\rangle=\langle b|U^*_k|a\rangle=0,\quad k=1,2.
\end{equation*}
If $\gamma_1\neq1$ (i.e. $\theta_1\neq0$) then this can not be valid
for nonzero vectors $a$ and $b$. If $\gamma_1=1$ then (\ref{ins-eq})
shows that $\bar{\alpha}\|a\|^2=-\bar{\beta}\|b\|^2$ while
(\ref{nbe-3}) with $\sigma_*=1$  and Lemma \ref{sl}A imply
$\bar{\beta}\alpha=1$, i.e. $\alpha=\beta$.
\smallskip

Similarly, if $\,i=2\,$ then by using Lemma \ref{sl}A one can obtain
from (\ref{nbe-3}) that
$$
x_1\parallel y_2\parallel p\otimes a\quad \textrm{and}\quad
x_3\parallel y_4\parallel p\otimes b
$$
for some nonzero vectors $\,a,b,p\,$. Hence (\ref{nbe-4}),
(\ref{nbe-5}) and Lemma \ref{sl}A imply
\begin{equation*}
\langle b|V_k|a\rangle=\langle b|V^*_k|a\rangle=0,\quad k=1,2,
\end{equation*}
which can not be valid for nonzero vectors $a$ and $b$ (since
$\theta_2\neq0\;\Rightarrow\,\gamma_2\neq\bar{\gamma}_2$).\medskip

Assume now that the pair $(x_1,y_4)$ have form 3 in Lemma
\ref{sys-l-1}A, i.e.
$$
x_1=a\left[\begin{array}{c}
\!\mu_1\! \\
\!1\!
\end{array}\right]\otimes\left[\begin{array}{c}
\!\mu_2\! \\
\!s\!
\end{array}\right]+
b\left[\begin{array}{c}
\!\mu_1\! \\
\!\!-1\!
\end{array}\right]\otimes\left[\begin{array}{c}
\!\mu_2\! \\
\!\!-s\!
\end{array}\right]\!,
\; y_4=c\left[\begin{array}{c}
\!\bar{\mu}_1\! \\
\!1\!
\end{array}\right]\otimes\left[\begin{array}{c}
\!\bar{\mu}_2\! \\
\!\!-s\!
\end{array}\right]+
d\left[\begin{array}{c}
\!\bar{\mu}_1\! \\
\!\!-1\!
\end{array}\right]\otimes\left[\begin{array}{c}
\!\bar{\mu}_2\! \\
\!s\!
\end{array}\right]\!,
$$
where $s=\pm1$, and show  incompatibility of the system
(\ref{nbe-1})-(\ref{nbe-5}) if the pair $(z_2,z_3)$ has forms 1,2,3
in Lemma \ref{sys-l-2}A. We will do this by reducing to the case of
tensor product vectors $\,x_1,z_2,z_3,y_4\,$ considered before.

1) The pair $(z_2,z_3)$ has form 1, i.e.
$$
z_2=\left[\begin{array}{c}
\!\mu_1\! \\
\!t\!
\end{array}\right]\otimes\left[\begin{array}{c}
\!p\! \\
\!q\!
\end{array}\right]\!,\; z_3=
\left[\begin{array}{c}
\!\bar{\mu}_1\! \\
\!\!-t\!\!
\end{array}\right]\otimes\left[\begin{array}{c}
\!x\! \\
\!y\!
\end{array}\right]\!,\;\; t=\pm1,\,|p|+|q|\neq0,\,|x|+|y|\neq0.
$$
By substituting the expressions for $x_1,z_2,z_3,y_4$ into
(\ref{nbe-2}) and by noting that
\begin{equation}\label{note}
\left\langle\begin{array}{c}
\!\bar{\mu}_1\! \\
\!s\!
\end{array}\right|U_k
\left|\begin{array}{c}
\!\mu_1\! \\
\!\!-s\!\!
\end{array}\right\rangle=0,\;\; s=\pm1,\;\; k=1,2
\end{equation}
we obtain
$$
b\left\langle\begin{array}{c}
\!\bar{\mu}_1\! \\
\!\!-1\!\!
\end{array}\right|U_k
\left|\begin{array}{c}
\!\mu_1\! \\
\!\!-1\!\!
\end{array}\right\rangle\!\!
\left\langle\begin{array}{c}
\!x\! \\
\!y\!
\end{array}\right|A
\left|\begin{array}{c}
\!\mu_2\! \\
\!\!-s\!\!
\end{array}\right\rangle
= -\sigma_{\!*}\bar{c}\left\langle\begin{array}{c}
\!\bar{\mu}_1\! \\
\!1\!
\end{array}\right|U_k
\left|\begin{array}{c}
\!\mu_1\! \\
\!\!1\!
\end{array}\right\rangle\!\!
\left\langle\begin{array}{c}
\!\bar{\mu}_2\! \\
\!\!-s\!\!
\end{array}\right|A
\left|\begin{array}{c}
\!p\! \\
\!q\!
\end{array}\right\rangle\quad \textup{if}\;\; t=1
$$ and
$$
a\left\langle\begin{array}{c}
\!\bar{\mu}_1\! \\
\!1\!
\end{array}\right|U_k
\left|\begin{array}{c}
\!\mu_1\! \\
\!1\!
\end{array}\right\rangle\!\!
\left\langle\begin{array}{c}
\!x\! \\
\!y\!
\end{array}\right|A
\left|\begin{array}{c}
\!\mu_2\! \\
\!s\!
\end{array}\right\rangle
= -\sigma_{\!*}\bar{d}\left\langle\begin{array}{c}
\!\bar{\mu}_1\! \\
\!\!-1\!\!
\end{array}\right|U_k
\left|\begin{array}{c}
\!\mu_1\! \\
\!\!-1\!\!
\end{array}\right\rangle\!\!
\left\langle\begin{array}{c}
\!\bar{\mu}_2\! \\
\!\!s\!
\end{array}\right|A
\left|\begin{array}{c}
\!p\! \\
\!q\!
\end{array}\right\rangle\quad \textup{if}\;\; t=-1.
$$
Validity of this equality for all $A\in\mathfrak{M}_2$ implies
$$
b\,\lambda^{-}_k \left|\begin{array}{c}
\!\mu_2\! \\
\!\!-s\!\!
\end{array}\right\rangle
\!\!\left\langle\begin{array}{c}
\!x\! \\
\!y\!
\end{array}\right|=-\sigma_{\!*}\bar{c}\,
\lambda^{+}_k \left|\begin{array}{c}
\!p\! \\
\!q\!
\end{array}\right\rangle
\!\!\left\langle\begin{array}{c}
\!\bar{\mu}_2\! \\
\!\!-s\!\!
\end{array}\right|
\quad \textup{if}\;\; t=1
$$
and
$$
 a\,\lambda^{+}_k\left|\begin{array}{c}
\!\mu_2\! \\
\!s\!
\end{array}\right\rangle
\!\!\left\langle\begin{array}{c}
\!x\! \\
\!y\!
\end{array}\right|=-\sigma_{\!*}\bar{d}\,
\lambda^{-}_k \left|\begin{array}{c}
\!p\! \\
\!q\!
\end{array}\right\rangle
\!\!\left\langle\begin{array}{c}
\!\bar{\mu}_2\! \\
\!\!s\!
\end{array}\right|
\quad \textup{if}\;\; t=-1,
$$
where $\,\lambda^{\pm}_1=\left\langle\begin{array}{c}
\!\bar{\mu}_1\! \\
\!\pm1\!
\end{array}\right|U_1
\left|\begin{array}{c}
\!\mu_1\! \\
\!\pm1\!
\end{array}\right\rangle=2\mu_1^2\,$ and $\,\lambda^{\pm}_2=\left\langle\begin{array}{c}
\!\!\bar{\mu}_1\! \\
\!\!\pm1\!\!
\end{array}\right|U_2
\left|\begin{array}{c}
\!\!\mu_1\! \\
\!\!\pm1\!\!
\end{array}\right\rangle=\pm2\mu_1$. Since $\,\lambda^{+}_1=\lambda^{-}_1\neq0\,$ and $\,\lambda^{+}_2=-\lambda^{-}_2\neq0$, validity of the above
equalities for $\,k=1,2\,$ implies $\,b=c=0\,$ if $\,t=1\,$ and
$\,a=d=0\,$ if $\,t=-1$. So, $\,x_1,z_2,z_3,y_4$ are product
vectors.\smallskip

2) The pair $(z_2,z_3)$ has form 2, i.e.
$$
z_2=\left[\begin{array}{c}
\!p\! \\
\!q\!
\end{array}\right]\otimes\left[\begin{array}{c}
\!\bar{\mu}_2\! \\
\!t\!
\end{array}\right]\!,\; z_3=
\left[\begin{array}{c}
\!x\! \\
\!y\!
\end{array}\right]\otimes\left[\begin{array}{c}
\!\mu_2\! \\
\!\!-t\!\!
\end{array}\right]\!,\;\; t=\pm1,\,|p|+|q|\neq0,\,|x|+|y|\neq0.
$$
By substituting the expressions for $\,x_1,z_2,z_3,y_4\,$ into
(\ref{nbe-3}) and by noting that
\begin{equation*}
\left\langle\begin{array}{c}
\!\bar{\mu}_2\! \\
\!t\!
\end{array}\right|V_k
\left|\begin{array}{c}
\!\mu_2\! \\
\!\!-t\!\!
\end{array}\right\rangle=0,\;\; t=\pm1,\;\; k=1,2
\end{equation*}
we obtain
$$
a \left\langle\begin{array}{c}
\!p\! \\
\!q\!
\end{array}\right|A
\left|\begin{array}{c}
\!\mu_1\! \\
\!1\!
\end{array}\right\rangle\!\!
\left\langle\begin{array}{c}
\!\bar{\mu}_2\! \\
\!t\!
\end{array}\right|V_k
\left|\begin{array}{c}
\!\mu_2\! \\
\!t\!
\end{array}\right\rangle
=\sigma_{\!*}\bar{c} \left\langle\begin{array}{c}
\!\bar{\mu}_1\! \\
\!1\!
\end{array}\right|A
\left|\begin{array}{c}
\!x\! \\
\!y\!
\end{array}\right\rangle\!\!
\left\langle\begin{array}{c}
\!\bar{\mu}_2\! \\
\!\!-t\!\!
\end{array}\right|V_k
\left|\begin{array}{c}
\!\mu_2\! \\
\!\!-t\!
\end{array}\right\rangle
\quad \textup{if}\;\; t=s
$$ and
$$
b \left\langle\begin{array}{c}
\!p\! \\
\!q\!
\end{array}\right|A
\left|\begin{array}{c}
\!\mu_1\! \\
\!\!-1\!\!
\end{array}\right\rangle\!\!
\left\langle\begin{array}{c}
\!\bar{\mu}_2\! \\
\!t\!
\end{array}\right|V_k
\left|\begin{array}{c}
\!\mu_2\! \\
\!t\!
\end{array}\right\rangle
=\sigma_{\!*}\bar{d} \left\langle\begin{array}{c}
\!\bar{\mu}_1\! \\
\!\!-1\!
\end{array}\right|A
\left|\begin{array}{c}
\!x\! \\
\!y\!
\end{array}\right\rangle\!\!
\left\langle\begin{array}{c}
\!\bar{\mu}_2\! \\
\!\!-t\!
\end{array}\right|V_k
\left|\begin{array}{c}
\!\mu_2\! \\
\!\!-t\!\!
\end{array}\right\rangle
\quad \textup{if}\;\; t=-s.
$$
Validity of this equality for all $A\in\mathfrak{M}_2$ implies
$$
a\,\nu^t_k \left|\begin{array}{c}
\!\mu_1\! \\
\!1\!
\end{array}\right\rangle
\!\!\left\langle\begin{array}{c}
\!p\! \\
\!q\!
\end{array}\right|=\sigma_{\!*}
\bar{c}\,\nu^{-t}_k \left|\begin{array}{c}
\!x\! \\
\!y\!
\end{array}\right\rangle
\!\!\left\langle\begin{array}{c}
\!\bar{\mu}_1\! \\
\!1\!
\end{array}\right|
\quad \textup{if}\;\; t=s
$$
and
$$
b\,\nu^{t}_k \left|\begin{array}{c}
\!\mu_1\! \\
\!\!-1\!\!
\end{array}\right\rangle
\!\!\left\langle\begin{array}{c}
\!p\! \\
\!q\!
\end{array}\right|=\sigma_{\!*}
\bar{d}\,\nu^{-t}_k \left|\begin{array}{c}
\!x\! \\
\!y\!
\end{array}\right\rangle
\!\!\left\langle\begin{array}{c}
\!\bar{\mu}_1\! \\
\!\!-1\!\!
\end{array}\right|
\quad \textup{if}\;\; t=-s,
$$
where $\,\nu^{t}_1=\left\langle\begin{array}{c}
\!\bar{\mu}_2\! \\
\!t\!
\end{array}\right|V_1
\left|\begin{array}{c}
\!\mu_2\! \\
\!t\!
\end{array}\right\rangle=2\mu_2^2\,$ and $\,\nu^{t}_2=\left\langle\begin{array}{c}
\!\!\bar{\mu}_2\! \\
\!\!t\!\!
\end{array}\right|V_2
\left|\begin{array}{c}
\!\!\mu_2\! \\
\!\!t\!\!
\end{array}\right\rangle=2t\mu_2$. Since $\,\nu^{t}_1=\nu^{-t}_1\neq0\,$ and
$\,\nu^{t}_2=-\nu^{-t}_2\neq0$, validity of the above equalities for
$\,k=1,2\,$ implies  $\,a=c=0\,$ if $\,t=s\,$ and $\,b=d=0\,$ if
$\,t=-s$. So, $\,x_1,z_2,z_3,y_4\,$ are product vectors.\smallskip

3) The pair $(z_2,z_3)$ has form 3, i.e.
$$ z_2=p\left[\begin{array}{c}
\!\mu_1\! \\
\!1\!
\end{array}\right]\otimes\left[\begin{array}{c}
\!\bar{\mu}_2\! \\
\!t\!
\end{array}\right]+
q\left[\begin{array}{c}
\!\mu_1\! \\
\!\!-1\!\!
\end{array}\right]\otimes\left[\begin{array}{c}
\!\bar{\mu}_2\! \\
\!\!-t\!
\end{array}\right]\!,
\; z_3=x\left[\begin{array}{c}
\!\bar{\mu}_1\! \\
\!\!-1\!\!
\end{array}\right]\otimes\left[\begin{array}{c}
\!\mu_2\! \\
\!t\!
\end{array}\right]+
y\left[\begin{array}{c}
\!\bar{\mu}_1\! \\
\!1\!
\end{array}\right]\otimes\left[\begin{array}{c}
\!\mu_2\! \\
\!\!-t\!\!
\end{array}\right]\!,
$$
where $t=\pm1$. If we substitute the expressions for
$\,x_1,z_2,z_3,y_4\,$ into (\ref{nbe-2}) (by using (\ref{note}))
then the left and the right hand sides of this equality will be
equal respectively to
$$
\bar{x}b \left\langle\begin{array}{c}
\!\bar{\mu}_1\! \\
\!\!-1\!\!
\end{array}\right|U_k
\left|\begin{array}{c}
\!\mu_1\! \\
\!\!-1\!\!
\end{array}\right\rangle\!\!\left\langle\begin{array}{c}
\!\mu_2\! \\
\!t\!
\end{array}\right|A
\left|\begin{array}{c}
\!\mu_2\! \\
\!\!-s\!\!
\end{array}\right\rangle
+\bar{y}a \left\langle\begin{array}{c}
\!\bar{\mu}_1\! \\
\!1\!
\end{array}\right|U_k
\left|\begin{array}{c}
\!\mu_1\! \\
\!1\!
\end{array}\right\rangle\!\!\left\langle\begin{array}{c}
\!\mu_2\! \\
\!\!-t\!\!
\end{array}\right|A
\left|\begin{array}{c}
\!\mu_2\! \\
\!s\!
\end{array}\right\rangle
$$
and to
$$
-\sigma_{\!*}\bar{c}p \left\langle\begin{array}{c}
\!\bar{\mu}_1\! \\
\!1\!
\end{array}\right|U_k
\left|\begin{array}{c}
\!\mu_1\! \\
\!1\!
\end{array}\right\rangle\left\langle\begin{array}{c}
\!\bar{\mu}_2\! \\
\!\!-s\!\!
\end{array}\right|A
\left|\begin{array}{c}
\!\bar{\mu}_2\! \\
\!t\!
\end{array}\right\rangle-\sigma_{\!*}
\bar{d}q \left\langle\begin{array}{c}
\!\bar{\mu}_1\! \\
\!\!-1\!\!
\end{array}\right|U_k
\left|\begin{array}{c}
\!\mu_1\! \\
\!\!-1\!\!
\end{array}\right\rangle\left\langle\begin{array}{c}
\!\bar{\mu}_2\! \\
\!s\!
\end{array}\right|A
\left|\begin{array}{c}
\!\bar{\mu}_2\! \\
\!\!-t\!\!
\end{array}\right\rangle.
$$
So, validity of the equality for all $A\in\mathfrak{M}_2$ implies
$$
\!\left[\bar{y}a \left|\begin{array}{c}
\!\mu_2\!\! \\
\!s\!
\end{array}\right\rangle
\!\!\left\langle\begin{array}{c}
\!\!\mu_2\!\! \\
\!\!-t\!
\end{array}\right|+\sigma_{\!*}
\bar{c}p \left|\begin{array}{c}
\!\bar{\mu}_2\!\! \\
\!t\!
\end{array}\right\rangle
\!\!\left\langle\begin{array}{c}
\!\!\bar{\mu}_2\!\! \\
\!\!-s\!
\end{array}\right|\right]
=\varsigma_k\!\left[\sigma_{\!*}\bar{d}q \left|\begin{array}{c}
\!\bar{\mu}_2\! \\
\!\!-t\!\!
\end{array}\right\rangle
\!\!\left\langle\begin{array}{c}
\!\bar{\mu}_2\!\! \\
\!s\!
\end{array}\right|+
\bar{x}b \left|\begin{array}{c}
\!\mu_2\! \\
\!\!-s\!\!
\end{array}\right\rangle
\!\!\left\langle\begin{array}{c}
\!\mu_2\!\! \\
\!t\!
\end{array}\right|\right]
$$
where $\varsigma_k\doteq-\lambda^{-}_k/\lambda^{+}_k=(-1)^k$. This
equality can be valid for $\,k=1,2\,$ only if the operators in the
squared brackets are equal to zero. Since $\mu_2\neq\pm\bar{\mu}_2$
by the assumption $\theta_2\neq0,\pi$, it follows that
$\,ya=cp=dq=xb=0$. It is easy to see that this implies that $\,x_1,
z_2, z_3, y_4\,$ are product vectors.\medskip

The similar argumentation shows incompatibility of the system
(\ref{nbe-1})-(\ref{nbe-5}) (by reducing to the case of tensor
product vectors) if the pair $(z_2,z_3)$ has form 3 and the pair
$(x_1,y_4)$ has form 1 or 2.\medskip

Assume finally that the pair $(x_1,y_4)$ has form 4 , i.e.
$$
x_1=h\left[\begin{array}{c}
\!\mu_1\! \\
\!s\!
\end{array}\right]\otimes\left[\begin{array}{c}
\!\mu_2\! \\
\!t\!
\end{array}\right],
\; y_4=\left[\begin{array}{c}
\!\bar{\mu}_1\! \\
\!\!-s\!\!
\end{array}\right]\otimes\left[\begin{array}{c}
\!a\! \\
\!b\!
\end{array}\right]+
\left[\begin{array}{c}
\!c\! \\
\!d\!
\end{array}\right]\otimes\left[\begin{array}{c}
\!\bar{\mu}_2\! \\
\!\!-t\!\!
\end{array}\right], \quad s,t=\pm1,
$$
and the pair $(z_2,z_3)$ is arbitrary. We will show that
(\ref{nbe-1})-(\ref{nbe-3}) imply that $y_4$ is a product vector.
So, in fact the pair $(x_1,y_4)$ has form 1 or 2.

Assume  that $\,y_4\,$ is not a product vector and  denote the
vectors $\,[\mu_1, s]^{\top}$ and $\,[\mu_2, t]^{\top}$ by
$|s\rangle$ and $|t\rangle$. In this notations
$\,|x_1\rangle=h|s\otimes t\rangle$.

In case a) it follows from (\ref{nbe-2}) and Lemma \ref{al-3} in the
Appendix that\break $|x_2\rangle=|p\otimes t\rangle\,$ for some
vector $|p\rangle$. Hence the left hand side of (\ref{nbe-1}) has
the form
$$
|h|^2|s\rangle\langle s|\otimes|t\rangle\langle t|+|p\rangle\langle
p|\otimes|t\rangle\langle t|=\left[|h|^2|s\rangle\langle
s|+|p\rangle\langle p|\right]\otimes|t\rangle\langle t|
$$
and (\ref{nbe-1}) implies $\,|y_4\rangle\langle
y_4|\leq\left[|h|^2|s\rangle\langle s|+|p\rangle\langle
p|\right]\otimes|t\rangle\langle t|$. This operator inequality can
be valid only if $\,y_4\,$ is a product vector.

In case b) it follows from (\ref{nbe-3}) and Lemma \ref{al-3} in the
Appendix that $\,|x_3\rangle=|s\otimes q\rangle\,$ for some vector
$|q\rangle$. Hence the left hand side of (\ref{nbe-1}) has the form
$$
|h|^2|s\rangle\langle s|\otimes|t\rangle\langle t|+|s\rangle\langle
s|\otimes|q\rangle\langle q|=|s\rangle\langle
s|\otimes\left[|h|^2|t\rangle\langle t|+|q\rangle\langle q|\right]
$$
and similarly to the case a) we conclude that $\,y_4\,$ is a product
vector.
\medskip

By using the same argumentation exploiting
(\ref{nbe-1})-(\ref{nbe-3}) and Lemma \ref{al-3} one can show that
neither $(x_1,y_4)$ nor $(z_2,z_3)$ can be a pair of form 4 or 5
(not coinciding with form 1 or 2).\medskip

Thus, we have shown that the system (\ref{b-eq-2})-(\ref{b-eq-10})
has no nontrivial solutions. This completes the proof of assertion
$\mathrm{C_2}$.  $\square$

\section*{Appendix}

\subsection{Proofs of Lemmas \ref{sys-l-1} and \ref{sys-l-2}}

\emph{Proof of Lemma \ref{sys-l-1}.} A) Let $\,\langle
z_4|=[a,b,c,d]\,$ and
$$
W=\left[\begin{array}{cccc}
a &  \gamma_2 b & \gamma_1 c &  \gamma_1\gamma_2 d\\
b &  a & \gamma_1 d & \gamma_1 c\\
c &  \gamma_2 d & a &  \gamma_2 b\\
d &  c & b &  a
\end{array}\right],\quad
S=\left[\begin{array}{rrrr}
\mu_1\mu_2 &  \mu_1\mu_2 & \mu_1\mu_2 &  \mu_1\mu_2\\
\mu_1 &  -\mu_1 & \mu_1 & -\mu_1\\
\mu_2 &  \mu_2 & -\mu_2 &  -\mu_2\\
+1 &  -1 & -1 &  +1
\end{array}\right],
$$
where $\mu_k=\sqrt{\gamma_k},\,k=1,2$. By identifying $A\otimes B$
with the matrix $\|a_{ij}B\|$ the equalities $\;\langle
z_4|U_k\otimes V_l|z_1\rangle=0,\;k,l=1,2\;$ can be rewritten as the
system  of linear equations
\begin{equation}\label{as-1}
W|z_1\rangle=0
\end{equation}
and it is easy to see that
$S^{-1}WS=\mathrm{diag}\{p_1,p_2,p_3,p_4\}$, where
\begin{equation}\label{p-def}
\begin{array}{c}
p_1=a+\mu_2 b+\mu_1 c+\mu_1\mu_2 d,\qquad p_2=a-\mu_2 b+\mu_1 c-\mu_1\mu_2 d,\\
p_3=a+\mu_2 b-\mu_1 c-\mu_1\mu_2 d,\qquad p_4=a-\mu_2 b-\mu_1
c+\mu_1\mu_2 d.
\end{array}
\end{equation}

So, system (\ref{as-1}) can be  rewritten as the system $p_ku_k=0$,
$k=\overline{1,4}$, where
$[u_1,u_2,u_3,u_4]^{\top}=S^{-1}|z_1\rangle$. Hence this system has
nontrivial solutions if and only if $\;p_1p_2p_3p_4=0\;$ and
$$
\{p_k=0\}\Leftrightarrow\{W|q_k\rangle=0\},
$$
where $|q_k\rangle$ is the $k$-th column of the matrix $S$.

Thus, by choosing some of $\,p_1,\ldots, p_4\,$ equal to zero we
obtain all pairs $(z_1,z_4)$ such that $\;\langle z_4|U_k\otimes
V_l|z_1\rangle=0,\;k,l=1,2\;$. We have
\begin{enumerate}[a)]
    \item $C_4^2=6\;$ possibilities to take $\;p_k=p_l=0\;$ and $\;p_i\neq0,\; i\neq
    k,l$;
    \item $C_4^1=4\;$ possibilities to take $\;p_k=0\;$ and $\;p_i\neq0, \;i\neq
    k$;
    \item $C_4^3=4\;$ possibilities to take $\;p_k=p_l=p_j=0\;$ and $\;p_i\neq0,\; i\neq
    k,l,j$.
\end{enumerate}
(the case $\;p_1=p_2=p_3=p_4=0\;$ means that $\;a=b=c=d=0\;$, so it
gives only trivial solution). \smallskip

By identifying the vectors $\,x\otimes y\,$ and  $\,[x_1y,
x_2y]^{\top}\,$ it is easy to see that
$$
|q_1\rangle\!=\!\left[\begin{array}{c}
\!\!\mu_1\!\! \\
\!1\!
\end{array}\right]\otimes\left[\begin{array}{c}
\!\!\mu_2\!\! \\
\!1\!
\end{array}\right]\!\!,\,
|q_2\rangle\!=\!\left[\begin{array}{c}
\!\!\mu_1\!\! \\
\!1\!
\end{array}\right]\otimes\left[\begin{array}{c}
\!\!\mu_2\!\! \\
\!\!-1\!\!
\end{array}\right]\!\!,\,
|q_3\rangle\!=\!\left[\begin{array}{c}
\!\!\mu_1\!\! \\
\!\!-1\!\!
\end{array}\right]\otimes\left[\begin{array}{c}
\!\!\mu_2\!\! \\
\!1\!
\end{array}\right]\!\!,\,
|q_4\rangle\!=\!\left[\begin{array}{c}
\!\!\mu_1\!\! \\
\!\!-1\!\!
\end{array}\right]\otimes\left[\begin{array}{c}
\!\!\mu_2\!\! \\
\!\!-1\!\!
\end{array}\right]
$$
and that
$$
p_1=0\quad\Leftrightarrow\quad |z_4\rangle=\left[\begin{array}{c}
c_1\! \\
c_2\!
\end{array}\right]\otimes\left[\begin{array}{c}
\!\bar{\mu}_2\! \\
\!\!-1\!\!
\end{array}\right]+\left[\begin{array}{c}
\!\bar{\mu}_1\! \\
\!\!-1\!\!
\end{array}\right]\otimes\left[\begin{array}{c}
c_3\! \\
c_4\!
\end{array}\right],\quad c_1,\ldots c_4\in\mathbb{C},
$$
$$
p_2=0\quad\Leftrightarrow\quad |z_4\rangle=\left[\begin{array}{c}
c_1\! \\
c_2\!
\end{array}\right]\otimes\left[\begin{array}{c}
\!\bar{\mu}_2\! \\
1
\end{array}\right]+\left[\begin{array}{c}
\!\bar{\mu}_1\! \\
\!\!-1\!\!
\end{array}\right]\otimes\left[\begin{array}{c}
c_3\! \\
c_4\!
\end{array}\right],\quad c_1,\ldots c_4\in\mathbb{C},
$$
$$
p_3=0\quad\Leftrightarrow\quad |z_4\rangle=\left[\begin{array}{c}
c_1\! \\
c_2\!
\end{array}\right]\otimes\left[\begin{array}{c}
\!\bar{\mu}_2\! \\
\!\!-1\!\!
\end{array}\right]+\left[\begin{array}{c}
\!\bar{\mu}_1\! \\
1
\end{array}\right]\otimes\left[\begin{array}{c}
c_3\! \\
c_4\!
\end{array}\right],\quad c_1,\ldots c_4\in\mathbb{C},
$$
$$
p_4=0\quad\Leftrightarrow\quad |z_4\rangle=\left[\begin{array}{c}
c_1\! \\
c_2\!
\end{array}\right]\otimes\left[\begin{array}{c}
\!\bar{\mu}_2\! \\
1
\end{array}\right]+\left[\begin{array}{c}
\!\bar{\mu}_1\! \\
1
\end{array}\right]\otimes\left[\begin{array}{c}
c_3\! \\
c_4\!
\end{array}\right],\quad c_1,\ldots c_4\in\mathbb{C},
$$
Hence the above six possibilities in a) correspond to cases 1)-3) in
Lemma \ref{sys-l-1}A (for example, the choice $\,p_1=p_2=0,\,$
$\,p_3,p_4\neq0\,$ corresponds to case 1) with $s=1$), while the
four possibilities in b) and in c) correspond respectively to cases
4) and 5). \medskip

B) Denote by $W_x$ and $W_y$ the above matrix $W$ determined
respectively via $z_4=x_4$ and $z_4=y_4$. Then the equalities in
(\ref{b-eq-7}) and (\ref{b-eq-8}) can be rewritten as the system
\begin{equation}\label{s-tmp}
    W_x|y_1\rangle=W_y|x_1\rangle=0,\qquad
    W_x|x_1\rangle=W_y|y_1\rangle=|c\rangle, \qquad
    |c\rangle\in\mathbb{C}^4.
\end{equation}
Since $S^{-1}W_xS=\mathrm{diag}\{p^x_1,p^x_2,p^x_3,p^x_4\}$ and
$S^{-1}W_yS=\mathrm{diag}\{p^y_1,p^y_2,p^y_3,p^y_4\}$, where
$p^x_1,p^x_2,p^x_3,p^x_4$ and $p^y_1,p^y_2,p^y_3,p^y_4$ are defined
in (\ref{p-def}) with $z_4=x_4$ and $z_4=y_4$ correspondingly,
system (\ref{s-tmp}) is equivalent to the following one
\begin{equation}\label{s-tmp-2}
p^x_k v_k=p^y_k u_k=0,\qquad p^x_k u_k=p^y_k v_k=\tilde{c}_k,\qquad
k=\overline{1,4},
\end{equation}
where $[u_1,u_2,u_3,u_4]^{\top}\!=S^{-1}|x_1\rangle$,
$[v_1,v_2,v_3,v_4]^{\top}\!=S^{-1}|y_1\rangle$ and
$[\tilde{c}_1,\tilde{c}_2,\tilde{c}_3,\tilde{c}_4]^{\top}\!=S^{-1}|c\rangle$.
It follows that $\tilde{c}_k=0$ for all $k$. Indeed, if $p^y_k\neq0$
for some $k$ then the first equality in (\ref{s-tmp-2}) implies
$u_k=0$ and the second equality in (\ref{s-tmp-2}) shows that
$\tilde{c}_k=0$. Hence $|c\rangle=S|\tilde{c}\rangle=0$.
$\square$\medskip

Lemma \ref{sys-l-2} follows from Lemma \ref{sys-l-1} with $\gamma_2$
replaced by $\bar{\gamma}_2$.

\subsection{Auxiliary lemmas}

\begin{lemma}\label{al-1}
\emph{If $\,|a\rangle\langle x|+|b\rangle\langle y|+|c\rangle\langle
z|=0\,$ then either $\;a\!\parallel\! b\!\parallel\! c\,$ or
$\;x\!\parallel\! y\!\parallel\! z$.}
\end{lemma}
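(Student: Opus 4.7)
The plan is a case split on $d := \dim\mathrm{span}\{x,y,z\}$, starting from the pointwise form of the hypothesis: applying the operator identity to an arbitrary test vector $\xi$ yields the scalar relation $\langle x|\xi\rangle|a\rangle + \langle y|\xi\rangle|b\rangle + \langle z|\xi\rangle|c\rangle = 0$ in the target space, valid for all $\xi$. Information about $a,b,c$ will be extracted by judicious choices of $\xi$ that isolate coefficients.

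When $d=3$, linear independence of $x,y,z$ lets us pick three test vectors dually paired to them, each forcing one of $a,b,c$ to vanish; so $a=b=c=0$ and the conclusion is trivial. When $d=1$, the alternative $x\parallel y\parallel z$ holds by definition and we are done. The substantive case is $d=2$, which I would split further. If two of the vectors happen to be parallel, say $|y\rangle=\lambda|x\rangle$ with $z$ not proportional to $x$, the identity collapses to $(|a\rangle+\bar\lambda|b\rangle)\langle x|+|c\rangle\langle z|=0$; choosing $\xi$ orthogonal to $z$ but not to $x$ (and vice versa) forces $|a\rangle+\bar\lambda|b\rangle=0$ and $|c\rangle=0$, so $a\parallel b$ with $c=0$. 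If no two of $x,y,z$ are pairwise parallel but all three lie in a common plane, write $|z\rangle=\alpha|x\rangle+\beta|y\rangle$ with both $\alpha,\beta\neq0$ and substitute $\langle z|=\bar\alpha\langle x|+\bar\beta\langle y|$; regrouping gives $(|a\rangle+\bar\alpha|c\rangle)\langle x|+(|b\rangle+\bar\beta|c\rangle)\langle y|=0$, and the linear independence of $x,y$ forces $|a\rangle=-\bar\alpha|c\rangle$ and $|b\rangle=-\bar\beta|c\rangle$, so $a,b,c$ are all proportional to $c$.

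No significant obstacle arises: the whole argument is elementary linear algebra driven by a single idea (test against well-chosen $\xi$'s). The only convention to fix is that the zero vector is to be regarded as parallel to every vector, matching how parallelism is used throughout the paper; otherwise the subcase $c=0$ in the $d=2$ split would need to be phrased more carefully.
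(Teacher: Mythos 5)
Your argument is correct and rests on the same elementary mechanism as the paper's: pairing the operator identity with test vectors orthogonal to a chosen subset of $\{x,y,z\}$, together with the convention that the zero vector is parallel to everything. The paper merely organizes the cases more economically --- it fixes an arbitrary $p\perp x$ and splits on the single dichotomy ``$b\parallel c$ versus $\langle y|p\rangle=\langle z|p\rangle=0$ for all such $p$'' (the latter forcing $y,z\in\mathrm{lin}\{x\}$), which subsumes your case analysis on $\dim\mathrm{lin}\{x,y,z\}$ in two lines.
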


\begin{proof} We may assume that all the vectors are nonzero (since otherwise the assertion is trivial).

Let $p\perp x$. Then $\langle y|p\rangle|b\rangle+\langle
z|p\rangle|c\rangle=0$ and hence either $b\parallel c$ or $\langle
y|p\rangle=\langle z|p\rangle=0$.

If $b\parallel c$ then we have $|a\rangle\langle
x|=-|b\rangle\langle y+\lambda z|,\, \lambda\in\mathbb{C},$ and
hence $a\parallel b\parallel c$.

If $\langle y|p\rangle=\langle z|p\rangle=0$ then $x\parallel
y\parallel z$, since the vector $p$ is arbitrary.  \end{proof}

\begin{lemma}\label{al-2}
\emph{The equality
\begin{equation}\label{al-2-e}
X_1\otimes Y_1+X_2\otimes Y_2=X_3\otimes Y_3+X_4\otimes Y_4,
\end{equation}
where $X_i=|x_i\rangle\langle x_i|,\;Y_i=|y_i\rangle\langle y_i|,\;
i=\overline{1,4}$, can be valid only in the following cases:}
\begin{enumerate}[1)]
    \item \emph{$x_i\parallel x_j\,$ for all $\;i,j\,$ and
    $\;Y_1\|x_1\|^2+Y_2\|x_2\|^2=Y_3\|x_3\|^2+Y_4\|x_4\|^2$};
    \item \emph{$y_i\parallel y_j\,$ for all $\;i,j\,$ and
    $\;X_1\|y_1\|^2+X_2\|y_2\|^2=X_3\|y_3\|^2+X_4\|y_4\|^2$};
    \item \emph{$X_1\otimes Y_1=X_4\otimes Y_4\,$ and $\;X_2\otimes Y_2=X_3\otimes
    Y_3$};
    \item \emph{$X_1\otimes Y_1=X_3\otimes Y_3\,$ and $\;X_2\otimes Y_2=X_4\otimes
    Y_4$}.
    \end{enumerate}
\end{lemma}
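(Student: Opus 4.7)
The plan is to view (\ref{al-2-e}) as saying that the positive operator $A$ defined by the common value of the two sides has rank at most $2$, and that both sets $\{x_1\otimes y_1,\,x_2\otimes y_2\}$ and $\{x_3\otimes y_3,\,x_4\otimes y_4\}$ span the range of $A$. I would split into cases by $\mathrm{rank}(A)$. If $\mathrm{rank}(A)\le 1$, the product vectors on each side are mutually proportional, and by unique factorization of rank-one tensor vectors this forces $x_1\parallel x_2$, $y_1\parallel y_2$ and similarly on the other side, so cases 1) and 2) both hold trivially.

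The essential case is $\mathrm{rank}(A)=2$. Here I would invoke the following dichotomy for a $2$-D subspace $V$ spanned by two non-proportional product vectors $x_1\otimes y_1$ and $x_2\otimes y_2$: either $x_1\parallel x_2$, in which case $V=\mathbb{C}\,x_1\otimes\mathrm{lin}\{y_1,y_2\}$ consists entirely of product vectors; or $y_1\parallel y_2$, symmetrically; or $V$ contains exactly the two product lines $\mathbb{C}(x_1\otimes y_1)$ and $\mathbb{C}(x_2\otimes y_2)$. This is a short calculation after identifying $V$ with the pencil of matrices $\alpha\, x_1 y_1^{\top}+\beta\, x_2 y_2^{\top}$ and asking when this combination has rank $1$.

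In the first alternative, both $x_3\otimes y_3$ and $x_4\otimes y_4$ must lie in $\mathbb{C}\,x_1\otimes\mathrm{lin}\{y_1,y_2\}$, forcing $x_3,x_4\parallel x_1$; writing $x_i=\lambda_i u$ converts (\ref{al-2-e}) into $|u\rangle\langle u|\otimes(|\lambda_1|^2Y_1+|\lambda_2|^2Y_2)=|u\rangle\langle u|\otimes(|\lambda_3|^2Y_3+|\lambda_4|^2Y_4)$, which is exactly case 1). The sub-case $y_1\parallel y_2$ yields case 2) by symmetry. In the third alternative the two product lines in $V$ are uniquely determined, so $\{x_3\otimes y_3,\,x_4\otimes y_4\}$ agrees as an unordered pair of lines with $\{x_1\otimes y_1,\,x_2\otimes y_2\}$; after a possible swap $3\leftrightarrow 4$ we have $X_3\otimes Y_3=c_1 X_1\otimes Y_1$ and $X_4\otimes Y_4=c_2 X_2\otimes Y_2$, and substituting into (\ref{al-2-e}) together with the linear independence of the two distinct rank-one operators $X_1\otimes Y_1$ and $X_2\otimes Y_2$ forces $c_1=c_2=1$; this is case 4), while the opposite pairing gives case 3). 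The only non-routine step is the product-line dichotomy for $V$; the rest is bookkeeping.
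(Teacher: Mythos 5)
Your proof is correct, but it takes a genuinely different route from the paper. The paper argues by compression: for $p\perp x_1$ it multiplies (\ref{al-2-e}) by $|p\rangle\langle p|\otimes I$ to get $|\langle x_2|p\rangle|^2Y_2=|\langle x_3|p\rangle|^2Y_3+|\langle x_4|p\rangle|^2Y_4$, then runs a case analysis on which of $x_2,x_3,x_4$ are parallel to $x_1$, invoking Lemma \ref{al-1} (the three-term rank-one identity) to finish the non-degenerate sub-case. You instead exploit that the common value is a positive operator $A$ of rank at most $2$ whose range is spanned by \emph{each} pair of product vectors (using that $\ker(P+Q)=\ker P\cap\ker Q$ for positive $P,Q$), and reduce everything to the classification of product vectors in a $2$-D subspace spanned by two non-proportional product vectors; your ``product-line dichotomy'' is exactly the observation that $\alpha\,x_1y_1^{\top}+\beta\,x_2y_2^{\top}$ with $x_1\nparallel x_2$, $y_1\nparallel y_2$ has rank $2$ unless $\alpha\beta=0$. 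Your version is more conceptual: it isolates a reusable geometric fact, makes transparent why exactly the four listed cases occur (the two ``all-product'' alternatives give cases 1 and 2, the ``exactly two product lines'' alternative gives cases 3 and 4 via the two possible pairings), and absorbs the degenerate situations (zero vectors, rank $\le 1$) uniformly into the rank analysis. The paper's version is more elementary and self-contained at the level of direct matrix manipulation and reuses its Lemma \ref{al-1}, which is needed elsewhere anyway. Both are complete; the one step you should spell out if writing this up is the identity $\mathrm{ran}(P+Q)=\mathrm{ran}\,P+\mathrm{ran}\,Q$ for positive operators, on which the claim that both pairs span the range of $A$ rests.
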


\begin{proof} We may assume that all the vectors $\,x_i,y_i\,$ are nonzero (since otherwise the assertion is trivial).

Let $p\perp x_1$. By multiplying the both sides of (\ref{al-2-e}) by
$|p\rangle\langle p|\otimes I$ we obtain
\begin{equation}\label{al-2-e+}
|\langle x_2|p \rangle|^2Y_2=|\langle x_3|p \rangle|^2Y_3+|\langle
x_4|p \rangle|^2Y_4.
\end{equation}
If $\,x_2\parallel x_1\,$ then $\,\langle x_3|p \rangle=\langle
x_4|p \rangle=0\,$ and hence $\,x_1\parallel x_2\parallel
x_3\parallel x_4$, since the vector $p$ is arbitrary. So, case 1)
holds.

If $\,x_2\nparallel x_1\,$ then one can choose $p$ such that
$\,\langle x_2|p \rangle\neq0$. So, (\ref{al-2-e+}) implies that
either $\,x_3\nparallel x_1\,$ or $\,x_4\nparallel x_1\,$. Thus, we
have the following possibilities:\smallskip

a) If $\,x_i\nparallel x_1\,$ for $\,i=2,3,4\,$ then one can choose
$p$ such that $\,\langle x_i|p \rangle\neq0,$ $\,i=2,3,4,$ and
(\ref{al-2-e+}) implies $\,y_2\parallel y_3\parallel y_4$. Hence
(\ref{al-2-e}) leads to $\,X_1\otimes Y_1=[\ldots]\otimes Y_2$,
which gives $\,y_1\parallel y_2$. So, we have $\,y_1\parallel
y_2\parallel y_3\parallel y_4$, i.e. case 2).\smallskip

b) If $\,x_i\nparallel x_1\,$ for $\,i=2,3\,$ but $\,x_4\parallel
x_1\,$ then one can choose $p$ such that $\,\langle x_i|p
\rangle\neq0$, $\,i=2,3$ and (\ref{al-2-e+}) implies $\,y_2\parallel
y_3$. So, we have $\,x_4=\alpha x_1\,$ and $\,y_3=\beta y_2$,
$\alpha,\beta\in\mathbb{C}$. It follows from (\ref{al-2-e}) that
$$
X_1\otimes [Y_1-|\alpha|^2Y_4]=[X_3|\beta|^2-X_2]\otimes Y_2
$$
and hence $Y_1-|\alpha|^2Y_4=\lambda Y_2$, $\lambda\in\mathbb{C}$.
If $\lambda\neq0$ then Lemma \ref{al-1} implies $y_1\parallel
y_2\parallel y_3\parallel y_4$, i.e. case 2) holds. If $\lambda=0$
then $y_1\parallel y_4$ and $x_2\parallel x_3$. Thus we have
$$
X_4\otimes Y_4=\gamma  X_1\otimes Y_1,\;\;  X_3\otimes Y_3= \delta
X_2\otimes Y_2,\quad \gamma,\delta\in\mathbb{C}
$$
and (\ref{al-2-e}) implies $(1-\gamma)X_1\otimes
Y_1=(\delta-1)X_2\otimes Y_2$. Since $x_1\nparallel x_2$, we have
$\gamma=\delta=1$, i.e. case 3) holds.\smallskip

c) If $\,x_i\nparallel x_1\,$ for $\,i=2,4\,$ but $\,x_3\parallel
x_1\,$ then the similar arguments (with the permutation
$3\leftrightarrow 4$) shows that case 4) holds.

\end{proof}

\begin{lemma}\label{al-3}
\emph{Let $\,U_1=\mathrm{diag}\{1,\gamma\}$ and $\,x,y$ be nonzero
vectors in $\,\mathbb{C}^2$. If $\,\langle a|\, U_1\otimes A
|\,x\otimes y\rangle=\langle c|\, U_1\otimes A |d \rangle\,$ for all
$\,A\in\mathfrak{M}_2$ then either
$\,|d\rangle=|z\rangle\otimes|y\rangle$ or
$\,|c\rangle=|p\rangle\otimes|q\rangle$ for some vectors $\,p,q,z$
in $\,\mathbb{C}^2$.}
\end{lemma}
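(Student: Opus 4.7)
The plan is to reformulate the assumed functional identity as a single operator equation on $\mathbb{C}^2$ and then resolve it by a short rank argument. Using the isomorphism (\ref{iso}) already employed in the paper, I would write $|c\rangle=[c^1,c^2]^{\top}$, $|d\rangle=[d^1,d^2]^{\top}$ with $c^i,d^i\in\mathbb{C}^2$, and $|a\rangle=[a^1,a^2]^{\top}$, $|x\rangle=[x_1,x_2]^{\top}$. Since $U_1=\mathrm{diag}\{1,\gamma\}$, a direct computation gives
\[
\langle a|U_1\otimes A|x\otimes y\rangle=\langle v|A|y\rangle,\qquad \langle c|U_1\otimes A|d\rangle=\langle c^1|A|d^1\rangle+\gamma\langle c^2|A|d^2\rangle,
\]
where $|v\rangle=\bar x_1|a^1\rangle+\bar\gamma\bar x_2|a^2\rangle$. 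Each side is $\Tr(AR)$ for an appropriate $R\in\mathfrak{M}_2$, so the hypothesis "for all $A\in\mathfrak{M}_2$" is equivalent to the operator identity
\[
T\;:=\;|d^1\rangle\langle c^1|+\gamma\,|d^2\rangle\langle c^2|\;=\;|y\rangle\langle v|
\]
on $\mathbb{C}^2$.

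Next I would exploit the factorization $T=DC^{*}$, where $D=[d^1\;d^2]$ and $C=[c^1\;\bar\gamma c^2]$ are viewed as $2\times 2$ matrices, together with the fact that $|y\rangle\langle v|$ has rank at most one. If $T\neq 0$ and $d^1,d^2$ are linearly independent, then $D$ is invertible, so $\rank T=\rank C$, forcing $c^1\parallel c^2$ and hence $|c\rangle=|p\rangle\otimes|q\rangle$ with $|q\rangle$ the common direction of $c^1,c^2$. If $T\neq 0$ and $d^1\parallel d^2$ (including the case where one of them vanishes), then $T$ has the form $|w\rangle\langle u|$ with $|w\rangle$ in the common line $\mathrm{span}\{d^1,d^2\}$; matching with $|y\rangle\langle v|$ forces $|w\rangle\parallel|y\rangle$, so both $d^1$ and $d^2$ are proportional to $y$, giving $|d\rangle=|z\rangle\otimes|y\rangle$.

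The remaining case $T=0$ reduces to $|d^1\rangle\langle c^1|=-\gamma|d^2\rangle\langle c^2|$. A short check of the trivial sub-cases where some $c^i$ or $d^i$ vanishes produces directly one of the two desired product forms for $|c\rangle$ or $|d\rangle$, while if all four are nonzero the identity forces $c^1\parallel c^2$, making $|c\rangle$ a product vector. The main obstacle is not any single deep step but keeping track of the degenerate sub-cases in the $T=0$ branch so that one of the two alternatives in the conclusion is recovered in each of them; the $T\neq 0$ branch is essentially a one-line consequence of the rank formula for $DC^{*}$.
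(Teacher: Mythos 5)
Your proposal is correct, and its first half is exactly the paper's argument: both reduce the hypothesis, via the block form $U_1\otimes A=\mathrm{diag}\{A,\gamma A\}$ and trace duality, to the single operator identity $|d^1\rangle\langle c^1|+\gamma|d^2\rangle\langle c^2|=|y\rangle\langle v|$ with $|v\rangle=\bar x_1|a^1\rangle+\bar\gamma\bar x_2|a^2\rangle$. The two proofs part ways only in how this identity is resolved: the paper rewrites it as a vanishing sum of three rank-one operators and invokes its Lemma \ref{al-1} (which it also reuses in Lemma \ref{al-2} and in the main proof of $\mathrm{C_2}$), concluding directly that either $d^1\parallel d^2\parallel y$ or $c^1\parallel c^2$; you instead factor the left side as $DC^*$ with $D=[d^1\;d^2]$, $C=[c^1\;\bar\gamma c^2]$ and argue by rank, splitting on whether $D$ is invertible and treating $T=0$ separately. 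Your route is self-contained and its degenerate-case bookkeeping is sound (e.g.\ in the $T=0$ branch, $c^2=0$ already makes $|c\rangle=[1,0]^{\top}\otimes c^1$ a product vector), but it essentially re-derives the special case of Lemma \ref{al-1} that the paper gets for free; the paper's version buys a reusable general statement, yours buys independence from it.
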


\begin{proof}
By using the isomorphism $\mathbb{C}^2\otimes\mathbb{C}^2\ni
u\otimes v \leftrightarrow [u_1v,
u_2v]^{\top}\in\mathbb{C}^2\oplus\mathbb{C}^2$ the condition of the
lemma can be rewritten as follows
$$
\left\langle\begin{array}{c}
\!a_1\! \\
\!a_2\!
\end{array}\right|\left.\begin{array}{cc}
\!A\! & \!0\! \\
\!0\! & \!\gamma A\!
\end{array}\right|
\left.\begin{array}{c}
\!x_1 y\! \\
\!x_2 y\!
\end{array}\right\rangle=
\left\langle\begin{array}{c}
\!c_1\! \\
\!c_2\!
\end{array}\right|\left.\begin{array}{cc}
\!A\! & \!0\! \\
\!0\! & \!\gamma A\!
\end{array}\right|
\left.\begin{array}{c}
\!d_1\! \\
\!d_2\!
\end{array}\right\rangle,\quad\forall A\in \mathfrak{M}_2,
$$
where $a_1, a_2$ are components of the vector $a$, etc. So, we have
$$
x_1 \langle a_1|A|y\rangle+x_2 \gamma\langle a_2|A|y\rangle=\langle
c_1|A|d_1\rangle+\gamma\langle c_2|A|d_2\rangle\quad\forall A\in
\mathfrak{M}_2,
$$
which is equivalent to the equality $\,|y\rangle\langle\bar{x}_1
a_1+\bar{x}_2\bar{\gamma}a_2|=|d_1\rangle\langle
c_1|+\gamma|d_2\rangle\langle c_2|$. By Lemma \ref{al-1} it follows
that either $d_1\parallel d_2 \parallel y$, which means that
$\,|d\rangle=|z\rangle\otimes|y\rangle$, or $c_1\parallel c_2$,
which means that $\,|c\rangle=|p\rangle\otimes|q\rangle$.
\end{proof}

\bigskip

I am grateful to A.S.Holevo and to the participants of his seminar
"Quantum probability, statistic, information" (the Steklov
Mathematical Institute) for useful discussion.

\end{document}